\newcommand{\cal}{\mathcal}
\newcommand{\State}[1]{#1\;}
\newcommand{\StateCmt}[2]{
  #1; \tcc*[f]{#2}\;
}
\renewcommand{\Return}[1]{\State{\textbf{return} #1}}
\let\oldnl\nl
\newcommand{\nonl}{\renewcommand{\nl}{\let\nl\oldnl}}
\newcommand{\stitle}[1]{\vspace{1ex} \noindent{\bf #1}}
\newcommand{\kw}[1]{{\ensuremath {\mathsf{#1}}}\xspace}
\newcommand{\la}{\leftarrow}
\newcommand{\cfig}{Figure~}
\newcommand{\ctab}{Table~}
\newcommand{\csec}{Section~}
\newcommand{\calg}{Algorithm~}
\newcommand{\cthm}{Theorem~}
\newcommand{\clem}{Lemma~}
\newcommand{\cequ}[1]{Equation~(#1)}
\newcommand{\ie}{{i.e.}}
\newcommand{\eg}{{e.g.}}
\newcommand{\bigo}{ {\cal O}}
\newcommand{\kdbb}{\kw{KDBB}}
\newcommand{\madec}{\kw{MADEC^+}}
\newcommand{\bbsearch}{\kw{Branch\&Bound}}
\newcommand{\lb}{lb}
\newcommand{\gT}{\cal T}
\newcommand{\gL}{\cal L}
\newcommand{\pivotplus}{\kw{Pivot\text{+}}}
\newcommand{\kdc}{\kw{kDC}}
\newcommand{\kdct}{\kw{kDC\text{-}two}}
\newcommand*\circled[1]{\tikz[baseline=(char.base)]{
            \node[fill=gray!10,shape=circle,draw,inner sep=0.5pt] (char) {#1};}}
\newcommand\vldbdoi{XX.XX/XXX.XX}
\newcommand\vldbpages{XXX-XXX}
\newcommand\vldbvolume{14}
\newcommand\vldbissue{1}
\newcommand\vldbyear{2020}
\newcommand\vldbauthors{\authors}
\newcommand\vldbtitle{\shorttitle} 
\newcommand\vldbavailabilityurl{URL_TO_YOUR_ARTIFACTS}
\newcommand\vldbpagestyle{plain} 
\begin{document}

\title{Maximum Defective Clique Computation: Improved Time Complexities
and Practical Performance}

\author{Lijun Chang}
\affiliation{
    \institution{The University of Sydney}
	\city{Sydney}
    \country{Australia}
}
\email{Lijun.Chang@sydney.edu.au}

\begin{abstract}
The concept of $k$-defective clique, a relaxation of clique by
allowing up-to $k$ missing edges, has been receiving increasing
interests recently. 
Although the problem of finding the maximum $k$-defective clique is
NP-hard, several practical algorithms have been recently proposed in the
literature, with \kdc being the state of the art. \kdc not only runs the
fastest in practice, but also achieves the best time complexity.
Specifically, it runs in $\bigo^*(\gamma_k^n)$ time when ignoring
polynomial factors; here, $\gamma_k$ is a constant that is smaller than
two and only depends on $k$, and $n$ is the number of vertices in the
input graph $G$.
In this paper, we propose the \kdct algorithm to improve the
time complexity as well as practical performance.
\kdct runs in $\bigo^*( (\alpha\Delta)^{k+2}
\gamma_{k-1}^\alpha)$ time when the maximum $k$-defective clique size
$\omega_k(G)$ is at least $k+2$, and in $\bigo^*(\gamma_{k-1}^n)$ time
otherwise, where $\alpha$ and $\Delta$ are the degeneracy and maximum
degree of $G$, respectively.
Note that, most real graphs satisfy $\omega_k(G) \geq k+2$, and for
these graphs, we not only improve the base (\ie, $\gamma_{k-1} <
\gamma_k$), but also the exponent, of the exponential time complexity.
In addition, with slight modification, \kdct also runs in
$\bigo^*( (\alpha\Delta)^{k+2} (k+1)^{\alpha+k+1-\omega_k(G)})$ time by
using the degeneracy gap $\alpha+k+1-\omega_k(G)$ parameterization;
this is better than $\bigo^*(
(\alpha\Delta)^{k+2}\gamma_{k-1}^\alpha)$ when $\omega_k(G)$ is close to
the degeneracy-based upper bound $\alpha+k+1$.
Finally, to further improve the practical performance, we propose a new
degree-sequence-based reduction rule that can be efficiently applied,
and theoretically demonstrate its effectiveness compared with those
proposed in the literature.
Extensive empirical studies on three benchmark graph
collections, containing $290$ graphs in total, show that our
\kdct algorithm outperforms the existing fastest algorithm \kdc by
several orders of magnitude.
\end{abstract}

\maketitle  \pagestyle{plain}

\pagestyle{\vldbpagestyle}
\begingroup\small\noindent\raggedright\textbf{PVLDB Reference Format:}\\
\vldbauthors. \vldbtitle. PVLDB, \vldbvolume(\vldbissue): \vldbpages, \vldbyear.\\
\href{https://doi.org/\vldbdoi}{doi:\vldbdoi}
\endgroup
\begingroup
\renewcommand\thefootnote{}\footnote{\noindent
This work is licensed under the Creative Commons BY-NC-ND 4.0 International License. Visit \url{https://creativecommons.org/licenses/by-nc-nd/4.0/} to view a copy of this license. For any use beyond those covered by this license, obtain permission by emailing \href{mailto:info@vldb.org}{info@vldb.org}. Copyright is held by the owner/author(s). Publication rights licensed to the VLDB Endowment. \\
\raggedright Proceedings of the VLDB Endowment, Vol. \vldbvolume, No. \vldbissue\ %
ISSN 2150-8097. \\
\href{https://doi.org/\vldbdoi}{doi:\vldbdoi} \\
}\addtocounter{footnote}{-1}\endgroup

\ifdefempty{\vldbavailabilityurl}{}{
\vspace{.3cm}
\begingroup\small\noindent\raggedright\textbf{PVLDB Artifact Availability:}\\
The source code, data, and/or other artifacts have been made available at \url{\vldbavailabilityurl}.
\endgroup
}

\section{Introduction}
\label{sec:introduction}

Graphs have been widely used to capture the relationship between
entities in applications such as social media, communication network,
e-commerce, and cybersecurity. Identifying dense subgraphs from those
real-world graphs, which are usually globally sparse (\eg, have a small
average degree), is a fundamental problem and has received a lot of
attention~\cite{Book18:Chang,MMGD10:Lee}.
Dense subgraphs may correspond to communities in social
networks~\cite{bedi2016community}, protein complexes in biological
networks~\cite{suratanee2014characterizing}, and anomalies in financial
networks~\cite{ahmed2016survey}.
The clique model, requiring every pair of vertices to be
directly connected by an edge, represents the densest structure that a
subgraph can be. As a result, clique related problems have been
extensively studied, \eg, theoretical aspect of maximum clique
computation~\cite{SJC77:Tarjan,TC86:Jian,JA86:Robson,clique}, practical
aspect of maximum clique
computation~\cite{KDD19:Chang,WALCOM17:Tomita,ORL90:Carraghan,COR17:Li,ICTAI13:Li,JGO94:Pardalos,IM15:Pattabiraman,JSC15:Rossi,COR16:Segundo,WALCOM10:Tomita,ICDE13:Xiang},
maximal clique enumeration~\cite{JEA13:Eppstein,TODS11:Cheng},
and $k$-clique counting and enumeration~\cite{PVLDB20:Li,WSDM20:Jain}.

Requiring every pair of vertices to be explicitly connected by an edge
however is often too restrictive in practice, by noticing that data
may be noisy or incomplete and/or the data collection process may
introduce errors~\cite{EOR13:Pattillo}.
In view of this, various clique relaxation models have been formulated
and studied in the literature, such as quasi-clique~\cite{LATIN02:Abello},
plex~\cite{OR11:Balasundaram}, club~\cite{EJOR02:Bourjolly}, and
defective clique~\cite{Bio06:Yu}.
In this paper, we focus on the defective clique model, where defective
cliques have been used for predicting missing interactions between proteins
in biological networks~\cite{Bio06:Yu}, cluster
detection~\cite{MP22:Stozhkov,SIGMOD23:Dai}, transportation
science~\cite{TS02:Sherali}, and social network
analysis~\cite{WWW20:Jain,IJC21:Gschwind}.
A subgraph with $a$ vertices is a $k$-defective clique if it has at
least ${a \choose 2}-k$ edges, \ie, it misses at most $k$ edges from
being a clique. A $k$-defective clique is usually referred to by its
vertices, since maximal $k$-defective cliques are vertex-induced subgraphs.
Consider the graph in \cfig\ref{fig:introduction}, $\{v_1,\ldots,v_4\}$
is a maximum clique, $\{v_1,\ldots,v_4,v_7\}$ is a {\em maximal} $2$-defective
clique, and $\{v_1,\ldots,v_6\}$ is a {\em maximum} $2$-defective clique that
maximizes the number of vertices.

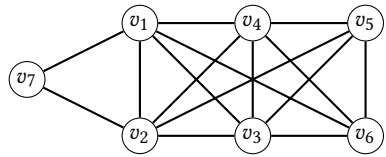
\begin{figure}[htb]
\centering
\begin{tikzpicture}[scale=0.5]
\node[draw, circle, inner sep=1.5pt] (1) at (0,0) {$v_1$};
\node[draw, circle, inner sep=1.5pt] (2) at (0,-3) {$v_2$};
\node[draw, circle, inner sep=1.5pt] (3) at (3,-3) {$v_3$};
\node[draw, circle, inner sep=1.5pt] (4) at (6,-3) {$v_6$};
\node[draw, circle, inner sep=1.5pt] (5) at (6,0) {$v_5$};
\node[draw, circle, inner sep=1.5pt] (6) at (3,0) {$v_4$};
\node[draw, circle, inner sep=1.5pt] (7) at (-3,-1.5) {$v_7$};
\path[draw,thick] (1) -- (2) -- (3) -- (4) -- (5) -- (6) -- (1);
\path[draw,thick] (6) -- (4) -- (1) -- (3) -- (6) -- (2) -- (5) -- (3);
\path[draw,thick] (1) -- (7) -- (2);
\end{tikzpicture}
\caption{Defective clique}
\label{fig:introduction}
\end{figure}

The state-of-the-art time complexity for maximum $k$-defective
computation is achieved by the \kdc algorithm proposed
in~\cite{SIGMOD24:Chang}, which runs in $\bigo^*(\gamma_k^n)$ time where
$\gamma_k < 2$ is the largest real root of the equation $x^{k+3} -
2x^{k+2}+1=0$ and $n$ is the number of vertices in the input graph.
The main ideas of achieving the time complexity are
$\circled{1}$~deterministically processing vertices that have
		up-to one non-neighbor (by reduction rule {\bf RR2}
		of~\cite{SIGMOD24:Chang}), and
$\circled{2}$~greedily ordering vertices (by branching rule
			{\bf BR} of~\cite{SIGMOD24:Chang}) such that the
			\mbox{length-$(k+2)$} prefix of the ordering induces more than $k$
			missing edges.
$\circled{2}$ ensures the time complexity since we only need to consider
up-to $k+2$ prefixes when enumerating the prefixes that can be added to
the solution, while $\circled{1}$ makes $\circled{2}$ possible.
Observing that $\circled{2}$ is impossible when each vertex has exactly one
non-neighbor,
it is natural to wonder whether the time complexity will be reduced if
there are techniques to make each vertex have more (than two)
non-neighbors. Unfortunately, the answer is negative.
An alternative way is to design a
different strategy (\ie, branching rule) for finding a subset of $k+1$ or fewer vertices that
induce more than $k$ missing edges.
However, this most likely cannot be conducted efficiently. Details of
these negative results will be
discussed in \csec\ref{sec:challenge}.

In this paper, we propose the \kdct algorithm to improve both the time
complexity and the practical performance for exact maximum $k$-defective
clique computation. 
Firstly, \kdct uses the same branching rule {\bf BR} and reduction rules
{\bf RR1} and {\bf RR2} as \kdc, but
we prove {\em a reduced base} (\ie, $\gamma_{k-1}$) for the time
complexity by using
different analysis techniques for different backtracking instances.
Specifically, let $\gT$ be the backtracking search tree (see
\cfig\ref{fig:full_tree}) where each node
represents a backtracking instance $(g,S)$ with $g$ being a subgraph of
the input graph $G$ and $S\subseteq V(g)$ a $k$-defective clique,
it is sufficient to bound the number of leaf nodes of $\gT$. 
Our general idea is that if at least one branching vertex (\ie,
previously selected by {\bf BR}) has been added to
$S$, then {\bf BR} computes an ordering of $V(g)\setminus S$ such that the
union of $S$ and a length-$(k+1)$ prefix of the ordering induce more
than $k$ missing edges (\clem\ref{lemma:bound_q}); consequently, the
number of leaf nodes of $\gT$
rooted at $(g,S)$ can be shown {\em by induction} to be at most
$\gamma_{k-1}^{|V(g)\setminus S|}$
(\clem\ref{lemma:leafs_of_nonleaf_nodes}).
Otherwise, we prove {\em non-inductively} that the number of leaf nodes
is at most $2\cdot \gamma_{k-1}^{|V(g)\setminus S|}$ by introducing the coefficient $2$ (\clem\ref{lemma:leafs}).

Secondly, \kdct also makes use of the diameter-two property of large
$k$-defective cliques (\ie, any $k$-defective clique of size $\geq k+2$
has a diameter at most two) to {\em reduce the exponent} of the time
complexity when $\omega_k(G) \geq k+2$. That is, once a vertex $u$ is
added to $S$, we can remove from $g$ those vertices whose shortest
distances (computed in $g$) to $u$ are larger than two.
Let $(v_1,v_2,\ldots,v_n)$ be a degeneracy ordering of $V(G)$. We
process each vertex $v_i$ by assuming that it is the first vertex of
the degeneracy ordering that is in the maximum $k$-defective clique;
note that, at least one of these $n$ assumptions will be true, and thus
we can find the maximum $k$-defective clique.
The search tree of processing $v_i$ has at most $2\cdot
\gamma_{k-1}^{\alpha\Delta}$ leaf nodes since we only need to consider
$v_i$'s neighbors and two-hop neighbors that come later than $v_i$ in
the degeneracy ordering; here $\alpha$ and $\Delta$ are
the degeneracy and maximum degree of $G$, respectively. Through a more
refined analysis, we show that the number of leaf nodes is also bounded
by $\bigo( (\alpha\Delta)^k \gamma_{k-1}^\alpha)$. Consequently, \kdct
runs in $\bigo(n\times (\alpha\Delta)^{k+2}\times
\gamma_{k-1}^{\alpha})$ time when $\omega_k(G) \geq k+2$; note that,
$\alpha \leq \sqrt{m}$ and is small in practice~\cite{JEA13:Eppstein},
where $m$ is the number of edges in $G$.
Furthermore, we show that \kdct, with slight modification, runs in
$\bigo^*( (\alpha\Delta)^{k+2} \times (k+1)^{\alpha+k+1-\omega_k(G)})$
time when using the degeneracy gap $\alpha+k+1-\omega_k(G)$
parameterization; this is better than
$\bigo^*(\gamma_{k-1}^\alpha)$ when $\omega_k(G)$ is close to its upper
bound $\alpha+k+1$.

Thirdly, we propose a new reduction rule {\bf RR3} to further improve the
practical performance of \kdct. {\bf RR3} is designed based on the
degree-sequence-based upper bound {\bf UB} proposed
in~\cite{AAAI22:Gao}. However, instead of using {\bf UB} to prune
instances after generating them as done in the existing
works~\cite{AAAI22:Gao,SIGMOD24:Chang}, we propose to remove vertex $u
\in V(g)\setminus S$ from $g$
if an upper bound of $(g, S\cup u)$ is no larger than $\lb$.
Note that, rather than computing the exact upper bound for $(g,S\cup
u)$, we test whether the upper bound is larger than $\lb$
or not. The latter can be conducted more efficiently and without generating
$(g,S\cup u)$; moreover, computation can be shared
between the testing for different vertices of $V(g)\setminus S$.
We show that with linear time preprocessing, the upper
bound testing for all vertices $u \in V(g)\setminus S$ can be conducted
in totally linear time.
In addition, we theoretically demonstrate the effectiveness of {\bf RR3}
compared with the existing reduction rules, \eg, the
degree-sequence-based reduction
rule and second-order reduction rule proposed
in~\cite{SIGMOD24:Chang}.

\stitle{Contributions.}
Our main contributions are as follows.
\begin{itemize}
	\item We propose the \kdct algorithm for exact maximum $k$-defective
		clique computation, and prove that it runs in
		$\bigo^*( (\alpha\Delta)^{k+2}\times \gamma_{k-1}^\alpha)$ time
		on graphs with $\omega_k(G)\geq k+2$ and in
		$\bigo^*(\gamma_{k-1}^n)$ time otherwise. This improves the
		state-of-the-art time complexity $\bigo^*(\gamma_k^n)$ by noting
		that $\gamma_{k-1} < \gamma_k$.
	\item We prove that \kdct, with slight modification, runs in
		$\bigo^*( (\alpha\Delta)^{k+2} \times
		(k+1)^{\alpha+k+1-\omega_k(G)})$ time when using the degeneracy
		gap $\alpha+k+1-\omega_k(G)$ parameterization.
	\item We propose a new degree-sequence-based reduction rule {\bf
		RR3} that can be conducted in linear time, and theoretically
		demonstrate its effectiveness compared with the
		existing reduction rules.
\end{itemize}
We conduct extensive empirical studies on three benchmark collections
with $290$ graphs in total to evaluate our techniques.
The results show that (1)~our algorithm \kdct solves $16$, $11$ and $7$
more graph instances than the fastest existing algorithm \kdc on the
three graph collections, respectively, for a time limit of $3$ hours and
$k=15$;
(2)~our algorithm \kdct solves all $114$ Facebook graphs with a time
limit of $30$ seconds for $k=1$, $3$ and $5$; (3)~on the $39$ Facebook
graphs that have more than $15,000$ vertices, \kdct is on average two
orders of magnitude faster than \kdc for $k=15$.

\stitle{Organizations.}
The remainder of the paper is organized as follows.
\csec\ref{sec:preliminaries} defines the problem, and
\csec\ref{sec:existing} reviews the state-of-the-art algorithm \kdc and its
time complexity analysis. We present our algorithm \kdct and its time
complexity analysis in \csec\ref{sec:approach}, and our new reduction
rule {\bf RR3} in \csec\ref{sec:rr}. Experimental results are discussed
in \csec\ref{sec:experiment}, followed by related works in
\csec\ref{sec:related_work}. Finally, \csec\ref{sec:conclusion}
concludes the paper.

\section{Problem Definition}
\label{sec:preliminaries}

We consider a large {\em unweighted}, {\em undirected} and
{\em simple} graph $G=(V, E)$ and refer to it simply as a graph; here,
$V$ is the vertex set and $E$ is the edge set.
The numbers of vertices and edges of $G$ are denoted by $n = |V|$ and
$m=|E|$, respectively.
An undirected edge between $u$ and $v$ is denoted by $(u,v)$ and $(v,u)$.
The set of edges that are missing from $G$ is called the set of
\textbf{\em non-edges} (or missing edges) of $G$ and denoted by $\overline{E}$, \ie,
$(u,v) \in \overline{E}$ if $u \neq v$ and $(u,v) \notin E$.
The set of $u$'s neighbors in $G$ is denoted $N_G(u) = \{v \in V \mid
(u,v) \in E\}$, and the {\em degree} of $u$ in $G$ is $d_G(u) =
|N_G(u)|$;
similarly, the set of $u$'s \textbf{\em non-neighbors} in $G$ is denoted
$\overline{N}_G(u) = \{v \in V \mid (u,v) \in \overline{E}\}$. Note that
\textbf{\em a vertex is neither a neighbor nor a non-neighbor of itself}.
Given a vertex subset $S \subseteq V$, the set of edges {\em induced} by $S$
is $E_G(S) = \{(u,v) \in E \mid u,v \in S\}$, the set of non-edges
induced by $S$ is $\overline{E}_G(S) = \{(u,v) \in \overline{E} \mid
u,v\in S\}$, and the subgraph of $G$ induced by $S$ is $G[S] = (S, E(S))$.
We denote the union of a set $S$ and a vertex $u$ by $S\cup u$, and
the subtraction of $u$ from $S$ by $S\setminus u$.
For presentation simplicity, we omit the subscript $G$ from the
notations when the context is clear, and abbreviate $N_{G[S\cup u]}(u) =
N(u)\cap S$
as $N_S(u)$ and $\overline{N}_{G[S\cup u]}(u) = \overline{N}(u)\cap S$ as $\overline{N}_S(u)$.
For an arbitrary graph $g$, we denote its sets of vertices, edges and
non-edges by $V(g)$, $E(g)$ and $\overline{E}(g)$, respectively. 

\begin{definition}[$k$-Defective Clique]
	A graph $g$ is a $k$-defective clique if it misses at most $k$
	edges from being a clique, \ie, $|E(g)| \geq
	\frac{|V(g)|(|V(g)|-1)}{2} - k$ or
	equivalently, $|\overline{E}(g)| \leq k$.
\end{definition}

Obviously, if a subgraph $g$ of $G$ is a $k$-defective clique, then the
subgraph of $G$ induced by vertices $V(g)$ is also a $k$-defective
clique. Thus, {\em we refer to a $k$-defective clique simply by its set
of vertices}, and measure the size of a $k$-defective clique $S
\subseteq V$ by its number of vertices, \ie, $|S|$.
The property of $k$-defective clique is {\em hereditary}, \ie, any
subset of a $k$-defective clique is also a $k$-defective clique.
A $k$-defective clique $S$ of $G$ is a {\em maximal $k$-defective
clique} if every proper superset of $S$ in $G$ is not a $k$-defective
clique, and is a {\em maximum $k$-defective clique} if its size is the
largest among all $k$-defective cliques of $G$; denote the size of the
maximum $k$-defective clique of $G$ by $\omega_k(G)$.
Consider the graph in \cfig\ref{fig:graph}, both
$\{v_1,v_7,\ldots,v_8\}$ and $\{v_2,v_{11},\ldots,v_{14}\}$ are maximum
$2$-defective cliques with $\omega_2(G) = 5$; that is, the maximum
$k$-defective clique is not unique.

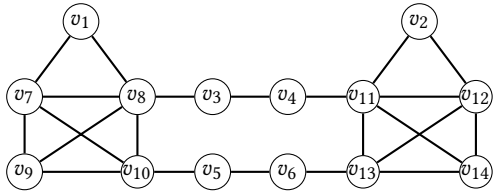
\begin{figure}[htb]
\centering
\begin{tikzpicture}[scale=0.5]
\node[draw, circle, inner sep=1.5pt] (1) at (1.5,2) {$v_1$};
\node[draw, circle, inner sep=1.5pt] (2) at (10.5,2) {$v_2$};
\node[draw, circle, inner sep=1.5pt] (3) at (5,0) {$v_3$};
\node[draw, circle, inner sep=1.5pt] (4) at (7,0) {$v_4$};
\node[draw, circle, inner sep=1.5pt] (5) at (5,-2) {$v_5$};
\node[draw, circle, inner sep=1.5pt] (6) at (7,-2) {$v_6$};
\node[draw, circle, inner sep=1.5pt] (7) at (0,0) {$v_7$};
\node[draw, circle, inner sep=1.5pt] (8) at (3,0) {$v_8$};
\node[draw, circle, inner sep=1.5pt] (9) at (0,-2) {$v_9$};
\node[draw, circle, inner sep=0pt] (10) at (3,-2) {$v_{10}$};
\node[draw, circle, inner sep=0pt] (11) at (9,0) {$v_{11}$};
\node[draw, circle, inner sep=0pt] (12) at (12,0) {$v_{12}$};
\node[draw, circle, inner sep=0pt] (13) at (9,-2) {$v_{13}$};
\node[draw, circle, inner sep=0pt] (14) at (12,-2) {$v_{14}$};
\path[draw,thick] (1) -- (7) -- (9) -- (10) -- (7) -- (8) -- (9);
\path[draw,thick] (1) -- (8) -- (10) -- (5) -- (6) -- (13) -- (14) --
(12) -- (2) -- (11) -- (14);
\path[draw,thick] (8) -- (3) -- (4) -- (11) -- (12) -- (13) -- (11);
\end{tikzpicture}
\caption{An example graph}
\label{fig:graph}
\end{figure}

\stitle{Problem Statement.} Given a graph $G = (V,E)$ and an integer $k
\geq 1$, we study the problem of maximum $k$-defective clique
computation, which aims to find the largest $k$-defective clique in $G$.

Frequently used notations are summarized in \ctab\ref{table:notations}.

\begin{table}[t]
\small
\caption{Frequently used notations}
\label{table:notations}
\begin{tabular}{cp{.7\columnwidth}} \hline
	Notation & Meaning \\ \hline
	$G = (V,E)$ & an unweighted, undirected and simple graph with vertex
	set $V$ and edge set $E$ \\
	$\omega_k(G)$ & the size of the maximum $k$-defective
	clique of $G$ \\
	$g = (V(g), E(g))$ & a subgraph of $G$ \\
	$S \subseteq V$ & a $k$-defective clique \\
	$(g,S)$ & a backtracking instance with $S\subseteq V(g)$ \\
	$N_S(u)$ & the set of $u$'s {\em neighbors} that are in $S$ \\
	$\overline{N}_S(u)$ & the set of $u$'s {\em non-neighbors} that are in $S$ \\
	$d_S(u)$ & the {\em number} of $u$'s neighbors that are in $S$ \\
	$E(S)$ & the set of {\em edges} induced by $S$ \\
	$\overline{E}(S)$ & the set of {\em non-edges} induced by $S$ \\
	$\gT, \gT'$ & search tree of backtracking algorithms \\
	$I, I', I_0, I_1,\ldots$ & nodes of the search tree $\gT$ or $\gT'$ \\
	$|I|$ & the size of $I$, \ie, $|V(I.g)\setminus I.S|$ \\
	$\ell_{\gT}(I)$ & number of leaf nodes in the subtree of $\gT$
	rooted at $I$ \\
	\hline
\end{tabular}
\end{table}

\section{The State-of-the-art Time Complexity}
\label{sec:existing}

In this section, we first review the state-of-the-art algorithm
\kdc~\cite{SIGMOD24:Chang} in \csec\ref{sec:kdc}, then briefly describe its
time complexity analysis in \csec\ref{sec:time_complexity_kdc}, and
finally discuss challenges of improving the time
complexity in \csec\ref{sec:challenge}.

\subsection{The Existing Algorithm \kdc}
\label{sec:kdc}

The problem of maximum $k$-defective clique computation is
NP-hard~\cite{STOC78:Yannakakis}. The existing exact algorithms compute
the maximum $k$-defective clique via {\em branch-and-bound} search (aka.
{\em backtracking}). Let $(g,S)$ denote a backtracking instance, where
$g$ is a (sub-)graph (of the input graph $G$) and $S \subseteq V(g)$ is a
$k$-defective clique in $g$. The goal of solving the instance $(g,S)$ is
to find the largest $k$-defective clique in the instance (\ie, in $g$
and containing $S$); thus, solving the instance $(G,\emptyset)$ finds
the maximum $k$-defective clique in $G$.
To solve an instance $(g,S)$, a backtracking algorithm selects a
branching vertex $b \in V(g)\setminus S$, and then recursively
solves two newly generated instances: one
includes $b$ into $S$, and the other removes $b$ from $g$. For
the base case that $S = V(g)$, $S$ is the maximum $k$-defective clique in
the instance.
For example, \cfig\ref{fig:full_tree} shows a snippet of the
backtracking search tree $\gT$, where each node corresponds to a
backtracking instance $(g,S)$. The two newly generated instances are
represented as the two children of the node, and the branching vertex
is illustrated on the edge; for the sake of simplicity,
\cfig\ref{fig:full_tree} only shows the branching vertices for the first
two levels.

\begin{figure}[htb]
\centering
\includegraphics[scale=.75]{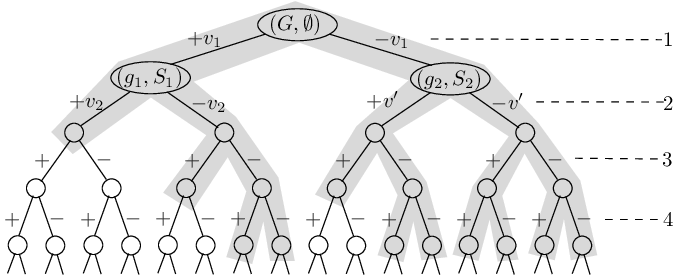}
\caption{A snippet of the (binary) search tree $\gT$ of a backtracking
algorithm}
\label{fig:full_tree}
\end{figure}

The state-of-the-art time complexity is achieved by
\kdc~\cite{SIGMOD24:Chang} which proposes a new
branching rule and two reduction rules to achieve the time complexity.
Specifically, \kdc proposes the non-fully-adjacent-first branching rule
\textbf{BR} preferring branching on a vertex that is not fully
adjacent to $S$, and the excess-removal reduction rule {\bf RR1} and
the high-degree reduction rule {\bf RR2}.
\begin{description}
	\item[BR~\cite{SIGMOD24:Chang}.] Given an instance $(g,S)$, the
		branching vertex is selected as the vertex of $V(g)\setminus S$
		that has at least one non-neighbor in $S$; if no such vertices
		exist, an arbitrary vertex of $V(g)\setminus S$ is chosen as
		the branching vertex.
	\item[RR1~\cite{SIGMOD24:Chang}.] Given an instance $(g,S)$, if a
		vertex $u \in
		V(g)\setminus S$ satisfies $|\overline{E}(S\cup u)| > k$, we
		can remove $u$ from $g$.
	\item[RR2~\cite{SIGMOD24:Chang}.] Given an instance $(g,S)$, if a
		vertex $u \in
		V(g)\setminus S$ satisfies $|\overline{E}(S\cup u)| \leq k$ and
		$d_g(u) \geq |V(g)|-2$, we can greedily add
		$u$ to $S$.
\end{description}

\subsection{Time Complexity Analysis of \kdc}
\label{sec:time_complexity_kdc}

The general idea of time complexity analysis is as follows.
As polynomial factors are usually ignored in the time complexity
analysis of exponential time algorithms, it is sufficient to bound the
number of leaf nodes of the search tree (in
\cfig\ref{fig:full_tree}) inductively in a
bottom-up fashion~\cite{Book2010:Fomin}. One way of bounding the
number of leaf nodes of the subtree rooted at the node
corresponding to instance $(g,S)$ is to order $V(g)\setminus S$ in such
a way that the longest prefix of the ordering that can be added to $S$
without violating the $k$-defective clique definition is short and bounded.
Specifically, let $(v_1,\ldots,v_l,v_{l+1},\ldots)$ be an ordering of
$V(g)\setminus S$ such that the longest prefix that can be added to $S$
without violating the $k$-defective clique definition is
$(v_1,\ldots,v_l)$; that is, $\{v_1,\ldots,v_l,v_{l+1}\} \cup S$ induces
more than $k$ non-edges.
Then, we only need to
generate $l+1$ new instances/branches, corresponding to the first $l+1$
prefixes, as shown in \cfig\ref{fig:multiway}: for the $i$-th (starting
from $0$) branch, we include $(v_1,\ldots,v_i)$ to $S$ and remove
$v_{i+1}$ from $g$. Denote the $i$-th branch by $(g_i,S_i)$. It holds
that 
\begin{itemize}
	\item $|V(g_i)\setminus S_i| \leq |V(g)\setminus S| - (i+1), \forall 0
		\leq i \leq l$. 
\end{itemize}
It can be shown by the techniques of~\cite{Book2010:Fomin}
that the number of leaf nodes of the search tree is at most $\gamma^n$
where $\gamma < 2$ is the largest real root of the equation
$x^{l_{\max}+2} -
2x^{l_{\max}+1}+1=0$ and $l_{\max}$ is the largest $l$ among all
non-leaf nodes. Thus, the smaller the value of $l_{\max}$, the
better the time complexity. 

\begin{figure}[htb]
\centering
	\includegraphics[scale=.8]{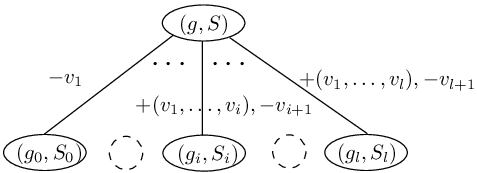}
\caption{General idea of time complexity analysis}
\label{fig:multiway}
\end{figure}

For \kdc, $l_{\max} = k+1$ and thus its time complexity is
$\bigo^*(\gamma_k^n)$ where $\gamma_k < 2$ is the largest real root of
the equation $x^{k+3} - 2x^{k+2}+1=0$~\cite{SIGMOD24:Chang}; here, the
$\bigo^*$ notation hides polynomial factors.
Specifically, \kdc orders $V(g)\setminus S$ by iteratively applying {\bf
BR}. That is, each time it appends to the ordering a vertex that has at
least one non-neighbor in either $S$ or the vertices already in the
ordering; if no such vertices exist, an arbitrary vertex of
$V(g)\setminus S$ is appended.
It is proved in~\cite{SIGMOD24:Chang} that after
exhaustively applying reduction rules \textbf{RR1} and \textbf{RR2}, the
resulting instance $(g,S)$ satisfies the condition:
\begin{itemize}
	\item $|\overline{E}(S\cup u)| \leq k$ and $d_g(u) < |V(g)|-2$,
		$\forall u \in V(g)\setminus S$.
\end{itemize}
\ie, \textbf{every vertex of $V(g)\setminus S$ has at least two
non-neighbors in $g$}. 
Then, the worst-case scenario (for time complexity) is that the
non-edges of $g[V(g)\setminus S]$ form a set of vertex-disjoint cycles;
a length-$(k+1)$ prefix of the ordering induces exactly $k$
non-edges, and a length-$(k+2)$ prefix induces more than $k$ non-edges.

One may notice that \cfig\ref{fig:full_tree} illustrates a {\em binary} search
tree while \cfig\ref{fig:multiway} shows a {\em multi-way} search tree.
Nevertheless, the
above techniques can be used to analyze \cfig\ref{fig:full_tree} since a
binary search tree can be (virtually) converted into an equivalent multi-way
search tree, which is the way the time complexity
of \kdc was analyzed in~\cite{SIGMOD24:Chang}. That is, we could collapse a
length-$l$ path in \cfig\ref{fig:full_tree} to make it have $l+1$
children. This will be more clear when we conduct our time complexity
analysis in \clem\ref{lemma:leafs_of_nonleaf_nodes}.

\subsection{Challenges of Improving the Time Complexity}
\label{sec:challenge}

As discussed in \csec\ref{sec:time_complexity_kdc}, the smaller the
value of $l_{\max}$, the better the time complexity.
\kdc~\cite{SIGMOD24:Chang} proves that $l_{\max} \leq k+1$ by making all
vertices of $V(g)\setminus S$ have at least two non-neighbors in $g$,
which is achieved by {\bf RR2}. In contrast, if all vertices of
$V(g)\setminus S$ have exactly one non-neighbor in $g$, then
$l_{\max}$ becomes $2k+1$ and the time complexity is
$\bigo^*(\gamma_{2k}^n)$, which is the case of \madec~\cite{COR21:Chen}.

It is natural to wonder whether the value of $l_{\max}$ can be 
reduced if we have techniques to make each vertex of $V(g)\setminus
S$ have more (than two) non-neighbors in $g$. Specifically, let's consider
the complement graph $\overline{g}$ of $g$: each edge of
$\overline{g}$ corresponds to a non-edge of $g$. The question is 
whether the {\bf BR} of \kdc can guarantee $l_{\max} < k+1$ when
$\overline{g}[V(g)\setminus S]$ has a minimum degree larger than two.
Unfortunately, the answer is negative.
It is shown in~\cite{JLMS63:Sachs} that for any $r \geq 2$ and $s \geq
3$, there exists a graph in which each vertex has exactly $r$ neighbors
and the shortest cycle has length exactly $s$; these graphs are called
$(r,s)$-graphs. Thus, when $\overline{g}[V(g)\setminus S]$ is an
$(r,s)$-graph for $s \geq k+2$, iteratively applying the branching rule
{\bf BR} may first identify vertices of the shortest cycle and it then
needs a prefix of length $k+2$ to cover $k+1$ edges of
$\overline{g}[V(g)\setminus S]$ (corresponding to $k+1$ non-edges of
$g$).
Alternatively, one may tempt to design a different
branching rule than {\bf BR} for finding a subset of $k+1$ or fewer
vertices $C \subseteq V(g)\setminus S$ such that $\overline{g}[C]$ has at
least $k+1$ edges. This most likely cannot be conducted efficiently, by
noting that it is NP-hard to
find a densest $k$-subgraph (\ie, a subgraph with exactly $k$ vertices
and the most number of edges) when $k$ is a part of the
input~\cite{Algorithmica01:Feige}.

\section{Our Algorithm with Improved Time Complexity}
\label{sec:approach}

Despite the challenges and negative results mentioned in
\csec\ref{sec:challenge}, we in this section show that we can reduce
both the base and the exponent of the time complexity.
In the following, we first present our
algorithm in \csec\ref{sec:algorithm}, then analyze its time complexity
in \csec\ref{sec:time_complexity}, and finally analyze the time
complexity again but using the degeneracy gap parameterization in
\csec\ref{sec:degen_gap}.

\subsection{Our Algorithm \kdct}
\label{sec:algorithm}

Our algorithm uses the same branching rule {\bf BR} and reduction
rules {\bf RR1} and {\bf RR2} as \kdc. But we will show in
\csec\ref{sec:time_complexity} that the base of the
time complexity is reduced by using different analysis techniques for
different nodes of the search tree.
Furthermore, we make use of the diameter-two property of large
$k$-defective cliques to reduce the exponent of the time complexity, by
observing that most real graphs have $\omega_k(G) \geq k+2$.

\begin{lemma}[Diameter-two Property of Large $k$-Defective
	Clique~\cite{COR21:Chen}]
\label{lemma:diameter_two}
For any $k$-defective clique, if it contains at least $k+2$ vertices,
its diameter is at most two (\ie, any two non-adjacent vertices
must have common neighbors in the defective clique).
\end{lemma}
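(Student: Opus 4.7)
The plan is to argue by contradiction using a straightforward non-edge counting argument. Let $S$ be a $k$-defective clique with $|S| \geq k+2$, and suppose for contradiction that there exist two vertices $u, v \in S$ with $(u,v) \notin E$ such that $u$ and $v$ share no common neighbor inside $S$. I will show this forces $S$ to induce strictly more than $k$ non-edges, violating the defining property $|\overline{E}(S)| \leq k$.

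First I would account for the non-edge $(u,v)$ itself, which contributes $1$ to $|\overline{E}(S)|$. Next, I would look at each of the remaining $|S|-2$ vertices $w \in S \setminus \{u,v\}$. The assumption that $u$ and $v$ have no common neighbor in $S$ says precisely that no such $w$ can be adjacent to both $u$ and $v$; hence for every such $w$, at least one of $(u,w)$ or $(v,w)$ lies in $\overline{E}(S)$. Summing these contributions (each of which is a distinct non-edge, since they involve distinct endpoints $w$) yields
\[
|\overline{E}(S)| \;\geq\; 1 + (|S|-2) \;=\; |S| - 1.
\]
Combining this with the hypothesis $|S| \geq k+2$ gives $|\overline{E}(S)| \geq k+1$, contradicting the fact that $S$ is a $k$-defective clique. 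Therefore every pair of non-adjacent vertices in $S$ must share a common neighbor inside $S$, which is exactly the statement that the diameter of $G[S]$ is at most two.

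I do not anticipate a real obstacle here; the argument is a one-step pigeonhole/counting bound. The only mild subtlety is making sure that the non-edges counted across distinct $w$'s are indeed distinct (which is immediate because they have distinct endpoints besides $u$ or $v$) and that $(u,v)$ is not double-counted against any of them (also immediate, as $(u,v)$ involves neither endpoint $w$). No induction, case analysis, or reference to the branching framework is needed; the lemma follows purely from the definition of a $k$-defective clique together with the size assumption $|S|\geq k+2$.
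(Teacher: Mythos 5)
Your proof is correct: the counting argument (one non-edge from $(u,v)$ plus at least one distinct non-edge per remaining vertex $w$, giving $|\overline{E}(S)| \geq |S|-1 \geq k+1$) is exactly the standard proof of this fact. The paper itself states this lemma as a cited result from Chen et al.\ without reproving it, and your argument is the same elementary pigeonhole bound used in that reference, so there is nothing to reconcile.
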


Following \clem\ref{lemma:diameter_two}, if we know that $\omega_k(G)
\geq k+2$, then for a
backtracking instance $(g,S)$ with $S\neq \emptyset$, we can remove from
$g$ the vertices whose shortest distance (computed in $g$) to any
vertex of $S$ is greater than two. This could significantly reduce the
search space, as real graphs usually have a small average degree.
However, it is difficult to utilize the diameter-two property reliably,
since we do not know before-hand whether $\omega_k(G) \geq k+2$ 
or not and a $k$-defective clique of size
smaller than $k+2$ may have a diameter larger than two.
To resolve this, we propose to compute the maximum $k$-defective clique
in two stages, where Stage-I utilizes the diameter-two property for
pruning by assuming $\omega_k(G) \geq k+2$. If Stage-I fails (to find a
	$k$-defective clique of size at least
$k+2$), then we go to Stage-II searching the graph again without
utilizing the diameter-two property. This guarantees that the maximum
$k$-defective clique is found regardless of its size.

\begin{algorithm}[htb]
\small
\caption{$\kdct(G, k)$}
\label{alg:kdct}
\KwIn{A graph $G$ and an integer $k$}
\KwOut{A maximum $k$-defective clique in $G$}

\vspace{1.5pt}
\State{$C^* \la \emptyset$}
\State{Let $(v_1,\dots,v_n)$ be a degeneracy ordering of the vertices of
$G$}
\ForEach{$v_i \in V(G)$}{
	\State{$A \la N(v_i)\cap \{v_i,\ldots,v_n\}$}
	\State{Let $g_{v_i}$ be the subgraph of $G$ induced by $N[A]\cap
	\{v_{i},\ldots,v_n\}$}
	\State{$\bbsearch(g_{v_i},\{v_i\})$}
}
\lIf{$|C^*| < k+1$}{$\bbsearch(G, \emptyset)$}
\Return{$C^*$}

\vspace{2pt}
\nonl \textbf{Procedure} $\bbsearch(g, S)$ \\
\State{$(g', S') \la $ apply reduction rules {\bf RR1} and {\bf
RR2} to $(g,S)$}
\lIf{$g'$ is a $k$-defective clique}{update $C^*$ by $V(g')$ and
\textbf{return}}
\State{$b\la $ choose a branching vertex from $V(g')\setminus S'$ based
on {\bf BR}}
\StateCmt{$\bbsearch(g', S'\cup b)$}{\footnotesize Left branch includes $b$}
\StateCmt{$\bbsearch(g'\setminus b, S')$}{\footnotesize Right branch excludes
	$b$}
\end{algorithm}

The pseudocode of our algorithm \kdct is shown in
\calg\ref{alg:kdct} which takes a graph $G$ and an integer $k$ as input
and outputs a maximum $k$-defective clique of $G$; here, \kw{two} refers
to both ``two''-stage and diameter-``two''.
Let $C^*$ store the
currently found largest $k$-defective clique, which is initialized as
$\emptyset$ (Line~1). We first compute a degeneracy ordering of the
vertices of $G$ (Line~2). Without loss of generality, let
$(v_1,\ldots,v_n)$ be the degeneracy ordering, \ie, for each $1
\leq i \leq n$, $v_i$ is the vertex with the smallest degree in the
subgraph of $G$ induced by $\{v_i,\ldots,v_n\}$; the degeneracy ordering
can be computed in $\bigo(m)$ time by the
peeling algorithm~\cite{JACM83:Matula}. Then, for
each vertex $v_i \in V(G)$, we compute the largest diameter-two
$k$-defective clique in which the first vertex, according to the
degeneracy ordering, is $v_i$, by invoking the procedure $\bbsearch$
with input $(g_{v_i}, \{v_i\})$ (Lines~4--6); that is, the diameter-two
$k$-defective clique contains $v_i$ and is a subset of
$(v_i,v_{i+1},\ldots,v_n)$. Here, $g_{v_i}$ is the subgraph of $G$
induced by $v_i$ and its neighbors and two-hop neighbors that come later
than $v_i$ according to the degeneracy ordering. After that, we check
whether the currently found largest $k$-defective clique $C^*$ is of
size at least $k+1$: if $|C^*| \geq k+1$, then $C^*$ is guaranteed to be
a maximum $k$-defective clique of $G$; otherwise, $\omega_k(G) \leq k+1$
and the maximum $k$-defective clique of $G$ may have
a diameter larger than two. For the latter, we invoke \bbsearch
again with input $(G,\emptyset)$ (Line~7). Note that, we do not make use
of the diameter-two property for pruning within the procedure \bbsearch,
and thus ensure the correctness of our algorithm.

The pseudocode of the procedure $\bbsearch$ is shown at Lines~9--13 of
\calg\ref{alg:kdct}. Given an input $(g,S)$, we first apply reduction
rules {\bf RR1} and {\bf RR2} to reduce the instance $(g,S)$ to a
potentially smaller instance $(g',S')$ such that $V(g')\setminus S'
\subseteq V(g)\setminus S$
(Line~9). 
If $g'$ itself is a $k$-defective clique, then we update $C^*$ by
$V(g')$ and backtrack (Line~10).
Otherwise, we pick a branching vertex $b$ based on the branching rule
\textbf{BR} (Line~11), and generate two \bbsearch instances
and go into recursion (Lines~12--13).

\begin{figure}[htb]
\centering
\includegraphics[scale=.8]{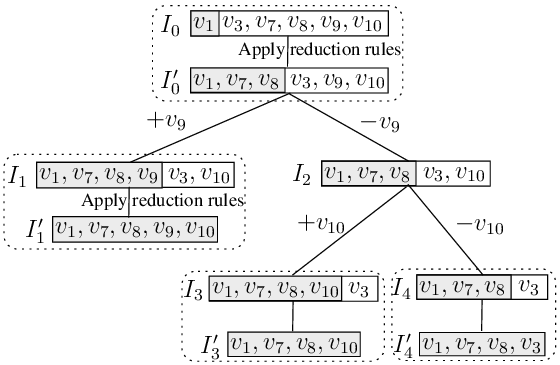}
\caption{Running example}
\label{fig:running_example}
\end{figure}

\begin{example}
Consider the graph $G$ in \cfig\ref{fig:graph}.
$(v_1,v_2,\ldots,v_{14})$ is a degeneracy ordering of $G$; for example,
$v_1$ has the smallest degree in $G$, and after removing $v_1$, $v_2$
then has the smallest degree, so on so forth. When processing $v_1$, we
only need to consider its neighbors $\{v_7,v_8\}$ and two-hop neighbors
$\{v_3,v_9,v_{10}\}$; that is, we only need to consider the subgraph
$g_{v_1}$ that is induced by $\{v_1,v_3,v_7,\ldots,v_{10}\}$. The
backtracking search tree for $\bbsearch(g_{v_1}, \{v_1\})$ is shown in
\cfig\ref{fig:running_example}, where vertices of $S$ for each instance
$(g,S)$ are in the shaded area. For the root node $I_0$, applying reduction
rule {\bf RR2} adds $v_7$ and $v_8$ into $S$ since each of them has at most
one non-neighbor in the subgraph; this results into the instance $I'_0$.
Suppose the branching rule {\bf BR} selects $v_9$ for $I'_0$ as $v_9$ is not
adjacent to $v_1$. Then, two new instances $I_1$ and $I_2$ are generated
as the two children of $I'_0$. For $I_1$, applying the reduction rule
{\bf RR1} removes $v_3$ from the graph and then applying {\bf RR2} adds
$v_{10}$ into $S$; consequently we reach a leaf node. Similarly,
$v_{10}$ is selected as the branching vertex for $I_2$, which then
generates two leaf nodes $I'_3$ and $I'_4$.
\end{example}

\subsection{Time Complexity Analysis of \kdct}
\label{sec:time_complexity}

To analyze the time complexity of \kdct, we use the same terminologies
and notations as~\cite{SIGMOD24:Chang} and consider the search tree
$\gT$ of (recursively) invoking \bbsearch at either Line~6 or Line~7 of
\calg\ref{alg:kdct}, as shown in \cfig\ref{fig:full_tree}.
To avoid confusion, nodes of the search tree are referred to by {\em
nodes}, and vertices of a graph by {\em vertices}.
Nodes of $\gT$ are denoted by $I, I', I_0, I_1,\ldots$, and the
graph $g$ and the partial solution $S$ of the \bbsearch instance to
which $I$ corresponds are, respectively, denoted by $I.g$ and $I.S$.
Note that {\em $I.g$ and $I.S$ denote the ones obtained after applying
the reduction rules} at Lines~9--10 of \calg\ref{alg:kdct}, where
Line~10 is regarded as applying the reduction rule that if $g'$ is
a $k$-defective clique, then all vertices of $V(g')\setminus S'$ are
added to $S'$; that is, $I_0,I_1,I_3,I_4$ in
\cfig\ref{fig:running_example} are instances before applying the
reduction rules and could be discarded from the search tree.
The size of $I$ is measured by the number of vertices of $I.g$
that are not in $I.S$, \ie, $|I| = |V(I.g)|-|I.S| =
|V(I.g)\setminus I.S|$. It is worth mentioning that 
\begin{itemize}
	\item $|I'| \leq |I|-1$ whenever $I'$ is a child of $I$, \eg, the
		branching vertex $b$ of $I$ is in $V(I.g)\setminus I.S$ but not
		in $V(I'.g)\setminus I'.S$.
	\item $|I|=0$ whenever $I$ is a leaf node
\end{itemize}

For a non-leaf node $I_0$, we consider the path
$(I_0,I_1,\ldots,I_q)$ that starts from $I_0$, always visits the left
child, and stops at $I_q$ once $q \geq 1$ and
$|V(g_q)| < |V(g_{q-1})|$; see \cfig\ref{fig:left_deep} for an example.
Note that, the path is well defined since $I_0$ is a non-leaf node and
thus $I_0.g$ is not a $k$-defective clique.
For presentation simplicity, we abbreviate $I_i.g$ and $I_i.S$ as $g_i$
and $S_i$, respectively, and denote the branching vertex selected for
$I_i$ by $b_i$.
Let $u$ be an arbitrary vertex of $V(g_{q-1})\setminus V(g_{q})$. Then,
$(b_0,\ldots,b_{q-1},u,\ldots)$ is an ordering of $V(g_0)\setminus S_0$
such that adding $(b_0,\ldots,b_{q-1},u)$ to $S_0$
violates the $k$-defective clique definition.
In the lemma below, we prove that $q \leq k$ if $I_0$
has at least one branching vertex being added.

\begin{figure}[htb]
\centering
\includegraphics[scale=0.7]{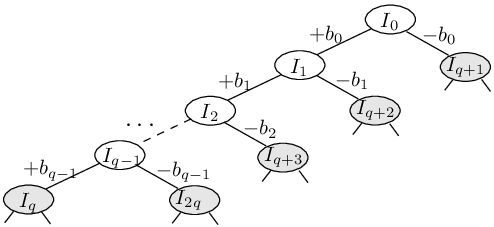}
\caption{A left-deep walk in the search tree $\gT$ starting from $I_0$;
	$I_q$ is the first
node such that $q \geq 1$ and $|V(g_q)| < |V(g_{q-1})|$}
\label{fig:left_deep}
\end{figure}

\begin{lemma}
\label{lemma:bound_q}
If the non-leaf node $I_0$ has at least one branching vertex being added,
then $q \leq k$.
\end{lemma}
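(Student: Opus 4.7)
The plan is to derive $q \le k$ by counting non-edges inside $S_q$: because $S_q$ is a $k$-defective clique, $|\overline{E}(S_q)| \le k$, and I will lower-bound $|\overline{E}(S_q)|$ by $1 + q$, yielding $q \le k-1 \le k$. The lower bound rests on two ingredients: the hypothesis already forces a non-edge to sit inside $S_0$, and {\bf BR}'s preference for non-fully-adjacent branching vertices ensures that each $b_i$ on the walk contributes at least one further non-edge.

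First I would record the book-keeping. Since no {\bf RR1} fires at the transitions $I_i \to I_{i+1}$ for $i < q-1$ (otherwise the walk would have stopped before reaching $I_q$), $V(g_i)=V(g_0)$ and $S_0 \subseteq S_1 \subseteq \cdots \subseteq S_{q-1}$; consequently the $b_i$ are distinct vertices of $V(g_0)\setminus S_0$, and decomposing $|\overline{E}(S_q)|$ by the order in which vertices enter $S_q$ gives
\[
|\overline{E}(S_q)| \;=\; |\overline{E}(S_0)| \;+\; \sum_{i=0}^{q-1}\bigl|\overline{N}(b_i)\cap S_i\bigr| \;+\; R_q,
\]
where $R_q \ge 0$ collects the contributions of the {\bf RR2} additions along the walk. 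It therefore suffices to prove (i)~$|\overline{E}(S_0)| \ge 1$ and (ii)~$|\overline{N}(b_i)\cap S_i| \ge 1$ for every $i = 0,\dots,q-1$.

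For (i), the hypothesis provides a branching vertex $b^* \in S_0$ chosen by {\bf BR} at some ancestor $J$ of $I_0$; {\bf BR}'s preference selects $b^*$ precisely because it has a non-neighbor $v^* \in J.S$, and since $J.S \cup \{b^*\} \subseteq S_0$ the non-edge $(b^*, v^*)$ lies inside $S_0$. For (ii), I would show that $V(g_i)\setminus S_i$ always contains a non-fully-adjacent vertex. The key observation is that if {\bf BR} at some $I_i$ with $i < q-1$ were forced into its arbitrary branch (all candidates fully adjacent to $S_i$), then $b_i$ would be fully adjacent, and its (at least two, by {\bf RR2}) non-neighbors $v \in V(g_i)\setminus(S_i\cup b_i)$ would each satisfy $|\overline{E}(S_i\cup b_i\cup v)| = |\overline{E}(S_i)| + 1$; the no-shrink condition at step $i+1$ then forces $|\overline{E}(S_i)| \le k-1$, and chaining this with the growing lower bound yields a contradiction before the walk reaches $I_q$. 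The boundary case $i=q-1$ follows by expanding the {\bf RR1}-removal condition $|\overline{E}(S_{q-1}\cup b_{q-1}\cup u)| > k$ against $|\overline{E}(S_{q-1}\cup u)| \le k$, which supplies the required $|\overline{N}(b_{q-1})\cap S_{q-1}| \ge 1$ once the single possible ``extra'' non-edge $(b_{q-1},u)$ is accounted for.

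The hard part is making (ii) rigorous in the presence of {\bf RR2}: an {\bf RR2} addition at step $i+1$ can in principle absorb one of $b_i$'s non-neighbors, so the simple ``Case B at $i$ implies Case A at $i+1$'' argument needs to be buttressed by the fact that each {\bf RR2}-added vertex has at most one non-neighbor in the graph and thus costs the induction at most one unit of non-edge slack. Once that trade-off is pinned down, (i) and (ii) combine to $k \ge |\overline{E}(S_q)| \ge 1 + q$, and the lemma follows.
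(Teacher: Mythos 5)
Your overall skeleton --- lower-bounding $|\overline{E}(S_q)|$ and comparing it against the budget $k$ --- is the same as the paper's, but both of your key ingredients fail, and they fail for the same reason: \textbf{BR} does \emph{not} always pick a vertex with a non-neighbor in $S$; when every vertex of $V(g)\setminus S$ is fully adjacent to $S$, it falls back to an \emph{arbitrary} vertex. Consequently your claim (i) is false: the branching vertex $b^*$ added at an ancestor of $I_0$ may have been chosen by this fallback, so $S_0$ can be a clique with $|\overline{E}(S_0)|=0$ (for instance, at the first branching out of the root $(G,\emptyset)$ no candidate can have a non-neighbor in $S=\emptyset$). Likewise (ii) is false: any $b_i$ on the walk may itself be a fallback choice with $\overline{N}_{S_i}(b_i)=\emptyset$. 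Your attempted rescue of (ii) --- deriving a contradiction whenever the fallback fires --- cannot work, because the fallback case genuinely occurs and your chain only yields the innocuous inequality $|\overline{E}(S_i)|\le k-1$; and your boundary argument at $i=q-1$ only gives $|\overline{N}_{S_{q-1}}(b_{q-1})| + [(b_{q-1},u)\notin E] \ge 1$, which is perfectly consistent with $b_{q-1}$ being fully adjacent to $S_{q-1}$ and the sole extra non-edge being $(b_{q-1},u)$. A telling symptom is that your target inequality $|\overline{E}(S_q)|\ge q+1$ would prove $q\le k-1$, strictly stronger than the lemma; it is not attainable by these means.

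The paper's proof is organized precisely around this fallback case. It lets $I_x$ be the \emph{last} node on the walk whose branching vertex is fully adjacent to $S_x$; if no such node exists, your one-non-edge-per-step counting works verbatim and already gives $q\le|\overline{E}(S_q)|\le k$. For the prefix up to $I_x$ it does not count one non-edge per step; instead it invokes \textbf{RR2} --- every surviving vertex has at least two non-neighbors in $g$ --- together with the definition of $I_x$ (which forces all those non-neighbors into $S_x$) to attach at least two non-edges inside $S_x$ to each of $b_0,\dots,b_{x-1}$, and, in the subcase $|\overline{E}(S_0)|=0$, also to the vertex $b_{-1}$ supplied by the hypothesis; a degree-sum count then gives $|\overline{E}(S_x)|>x$. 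After $I_x$ every branching vertex does contribute a fresh non-edge, so $|\overline{E}(S_q)|\ge q$ and $q\le k$. To repair your write-up you need this two-regime analysis; the single per-step invariant you propose does not hold.
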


\begin{proof}
Let $I_x$ be the last node, on the path $(I_0, I_1, \ldots,I_x,\ldots,
I_{q-1})$, satisfying the condition that all vertices of
$V(g_x)\setminus S_x$ are adjacent to all vertices of $S_x$, \ie,
the branching vertex $b_x$ selected for $I_x$ has no non-neighbors in
$S_x$. If such an $I_x$ does not exist, then we have
$|\overline{E}(S_{i+1})| \geq |\overline{E}(S_{i})|+1$ for all $0 \leq i
< q$ (because the branching vertex added to $S_{i+1}$ must bring at
least one non-edge to $\overline{E}(S_{i+1})$), and consequently $q \leq
|\overline{E}(S_q)| \leq k$.
In the following, we assume that such an $I_x$ exists, and prove that
$|\overline{E}(S_x)| > x$ by considering two cases.
Note that, each of the branching vertices $b_i \in
\{b_0,\dots,b_{x-1}\}$ that are added to $S_x$
must have at least two non-neighbors in $g_i$ (because of {\bf RR2})
and all these non-neighbors are in $S_x$
(according to the definitions of $I_x$ and $I_q$).
\begin{itemize}
	\item \textbf{Case-I}: $|\overline{E}(S_0)|\neq 0$. Then,
		the number of unique non-edges associated with
		$\{b_0,\ldots,b_{x-1}\}$ is at least $x$, and
		$|\overline{E}(S_x)| \geq |\overline{E}(S_0)| + x > x$.
	\item \textbf{Case-II}: $|\overline{E}(S_0)| = 0$. Let $b_{-1}$ be
		the branching vertex added to $S_0$ from its parent
		(which exists according to the lemma statement). Then, $b_{-1}$
		has at least two non-neighbors in $g_0$; note that these
		non-neighbors will not be removed from $g_0$ by \textbf{RR1}
		since $|\overline{E}(S_0)|=0$ and all vertices of
		$V(g_0)\setminus S_0$ are fully adjacent to vertices of
		$S_0\setminus b_{-1}$. Thus, the
		number of non-edges between $\{b_{-1}, b_0,\ldots,b_{x-1}\}$ is
		at least $x+1$, and hence $|\overline{E}(S_x)| > x$.
\end{itemize}
Then, according to the definition of $I_x$ and our branching rule, for
each $i$ with $x+1\leq i < q$, the branching vertex $b_i$
selected for $I_i$ has at least one non-neighbor in
$S_i$, and consequently, 
\[
	|\overline{E}(S_q)| \geq |\overline{E}(S_x)| + (q-x-1) > q-1
\]
Thus, the lemma follows from the fact that $|\overline{E}(S_q)| \leq k$.
\end{proof}

Let $\ell_{\gT}(I)$ denote the number of leaf nodes in the
subtree of $\gT$ rooted at $I$. We prove in the lemma below that
$\ell_{\gT}(I) \leq \beta_k^{|I|}$ when at least one branching vertex
has been added to $I$. Note that $\beta_k = \gamma_{k-1}$.

\begin{lemma}
	\label{lemma:leafs_of_nonleaf_nodes}
	For any node $I$ of $\gT$ that has at least one branching vertex being
	added, it holds that $\ell_{\gT}(I) \leq \beta_k^{|I|}$ where $1 <
	\beta_k< 2$ is the largest real root of the equation
	$x^{k+2}-2x^{k+1}+1=0$.
\end{lemma}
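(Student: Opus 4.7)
The plan is to induct on $|I|$. The base case $|I|=0$ is immediate: then $I$ is a leaf and $\ell_{\gT}(I)=1=\beta_k^0$. For the inductive step I take $I_0 := I$ non-leaf with $|I_0|\geq 1$ and consider the left-deep walk $(I_0,I_1,\ldots,I_q)$ together with the right siblings $R_0,\ldots,R_{q-1}$ (where $R_i$ is the right child of $I_i$) introduced just before \clem\ref{lemma:bound_q}. Telescoping along the walk decomposes the leaf count as
\[
\ell_{\gT}(I_0) \;=\; \ell_{\gT}(I_q) + \sum_{i=0}^{q-1}\ell_{\gT}(R_i),
\]
since at each $I_i$ with $i<q$ the leaves below split into those beneath $I_{i+1}$ and those beneath $R_i$.

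Next I supply size bounds adequate for invoking the inductive hypothesis on every child. Because $b_i$ is excluded from $R_i.g$ while $R_i.S = I_i.S$, iterating $|I_{j+1}|\leq|I_j|-1$ gives $|R_i|\leq|I_0|-i-1$. For $I_q$ the stopping condition $|V(g_q)|<|V(g_{q-1})|$ forces at least one RR1-deletion on top of $b_{q-1}$ being absorbed into $S$, so $|I_q|\leq|I_{q-1}|-2\leq|I_0|-q-1$. Every child inherits a branching-vertex addition from $I_0$, so the inductive hypothesis applies and yields
\[
\ell_{\gT}(I_0) \;\leq\; \beta_k^{|I_0|-q-1} + \sum_{i=0}^{q-1}\beta_k^{|I_0|-i-1} \;=\; \beta_k^{|I_0|}\cdot\frac{1-\beta_k^{-(q+1)}}{\beta_k-1}.
\]

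Simplifying, the target inequality $\ell_{\gT}(I_0)\leq\beta_k^{|I_0|}$ reduces to the purely algebraic statement $\beta_k^{q+1}(2-\beta_k)\leq 1$. By \clem\ref{lemma:bound_q} we have $q\leq k$, and the defining polynomial $\beta_k^{k+2}-2\beta_k^{k+1}+1=0$ rewrites as $\beta_k^{k+1}(2-\beta_k)=1$; since $\beta_k>1$, the monotonicity chain $\beta_k^{q+1}(2-\beta_k)\leq\beta_k^{k+1}(2-\beta_k)=1$ closes the induction.

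The main obstacle will be handling the degenerate case where the left-deep walk reaches a leaf before the size-decrease stopping condition triggers, since then the strong bound $|I_q|\leq|I_0|-q-1$ is not directly available. I would address this by noting that the vertex set $S_q$ remains a $k$-defective clique, so the argument of \clem\ref{lemma:bound_q} continues to yield $q\leq k$, and then substituting $\ell_{\gT}(I_q)=1\leq\beta_k^{|I_0|-q-1}$ whenever $|I_0|\geq q+1$; any remaining very small $|I_0|$ instances would be verified directly.
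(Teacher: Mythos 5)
Your proof is correct and follows essentially the same route as the paper: induction on $|I|$, decomposition of $\ell_{\gT}(I_0)$ along a left-deep walk into the terminal node plus the right siblings, the size bounds $|R_i|\leq|I_0|-i-1$ and $|I_q|\leq|I_0|-q-1$, the bound $q\leq k$ from \clem\ref{lemma:bound_q}, and the closed-form geometric-sum computation in place of the paper's citation of the standard recurrence from~\cite{Book2010:Fomin}. The degenerate case you flag is vacuous --- if the left-deep walk reached a leaf with no graph shrinkage then $V(g_q)=S_q$ would make $g_q=g_0$ a $k$-defective clique and $I_0$ itself a leaf, which is exactly why the paper remarks that the path is well defined --- so your fallback estimate $1\leq\beta_k^{|I_0|-q-1}$ (which would in fact fail when $|I_0|=q$) is not needed.
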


\begin{proof}
We prove the lemma by induction.
For the base case that $I$ is a leaf node, it is trivial that
$\ell_{\gT}(I) =
1 \leq \beta_k^{|I|}$ since $\beta_k > 1$ and $|I| = 0$.
For a non-leaf node $I_0$, let's consider the path $(I_0,
I_1,\ldots,I_{q'})$ that starts from $I_0$, always visits the left child in
the search tree $\gT$, and stops at $I_{q'}$ once $q' \geq 1$ and
$|I_{q'}| \leq |I_{q'-1}|-2$. Note that $(I_0,\ldots,I_{q'})$ is a
prefix of $(I_0,\ldots,I_q)$ since $I_q$
satisfies the condition that $q \geq 1$ and $|I_q| \leq |I_{q-1}|-2$.
It is trivial that 
\[
	\ell_{\gT}(I_0) = \ell_{\gT}(I_{q'}) + \ell_{\gT}(I_{q'+1}) + \ell_{\gT}(I_{q'+2}) +
	\cdots + \ell_{\gT}(I_{2q'})
\]
where $I_{q'+1},I_{q'+2},\ldots,I_{2q'}$ are the right child of
$I_0,I_1,\ldots,I_{q'-1}$, respectively, as illustrated in
\cfig\ref{fig:left_deep} (by replacing $q$ with $q'$); {\em this is
equivalent to converting a binary search tree to a multi-way search tree
by collapsing the path $(I_0,\ldots,I_{q'-1})$ into a super-node that has
$I_{q'},I_{q'+1},\ldots,I_{2q'}$ as its children}.
To bound $\ell_{\gT}(I_0)$, we need to bound $q'$ and $|I_i|$ for $q'
\leq i \leq 2q'$.
Following from \clem\ref{lemma:bound_q}, we have
\begin{description}
	\item[Fact 1.] $q' \leq q \leq k$.
\end{description}
Also, according to the definition of the path, it holds that 
\begin{equation}
	\forall i \in [1,q'-1],\quad
	S_i = S_{i-1}\cup b_{i-1},\quad V(g_i) = V(g_{i-1})
	\label{eq:path}
\end{equation}
That is, the reduction rules at Line~9 of
\calg\ref{alg:kdct} have no effect on $I_i$ for $1 \leq i < q'$.
Then, the following two facts hold.
\begin{description}
	\item[Fact 2.] $\forall i \in [q'+1, 2q']$,
		$|I_i| \leq |I_{i-q'-1}|-1 \leq |I_0| + q'-i$.
	\item[Fact 3.] $|I_{q'}| \leq |I_{q'-1}|-2 \leq |I_0|-q'-1$.
\end{description}
Based on Facts 1, 2 and 3, we have 
\begin{align*}
	\ell_{\gT}(I_0) & = \ell_{\gT}(I_{q'+1}) + \ell_{\gT}(I_{q'+2}) +
	\cdots + \ell_{\gT}(I_{2q'}) +
	\ell_{\gT}(I_{q'}) \nonumber \\
	& \leq \beta_k^{|I_{q'+1}|} + \beta_k^{|I_{q'+2}|} + \cdots +
	\beta_k^{|I_{2q'}|} + \beta_k^{|I_{q'}|} \nonumber \\
	& \leq \beta_k^{|I_0|-1} + \beta_k^{|I_0|-2} + \cdots +
	\beta_k^{|I_0|-q'} + \beta_k^{|I_0|-q'-1} \nonumber \\
	& \leq \beta_k^{|I_0|-1} + \beta_k^{|I_0|-2} + \cdots +
	\beta_k^{|I_0|-k} + \beta_k^{|I_0|-k-1}
	\label{eq:time_complexity_proof}
\end{align*}
where $\beta_k^{|I_0|-1} + \beta_k^{|I_0|-2} + \cdots + \beta_k^{|I_0|-k}
+ \beta_k^{|I_0|-k-1} \leq \beta_k^{|I_0|}$ if $\beta_k$ is
no smaller than the largest real root of the equation $x^{k+1} -
x^{k} -\cdots - x - 1=0$ which is equivalent to the equation
$x^{k+2}-2x^{k+1}+1=0$~\cite{Book2010:Fomin}. The first few solutions to
the equation are $\beta_1= 1.619$, $\beta_2= 1.840$, $\beta_3= 1.928$,
$\beta_4=1.966$, and $\beta_5=1.984$.
\end{proof}

\begin{figure}[htb]
\centering
\includegraphics[scale=0.7]{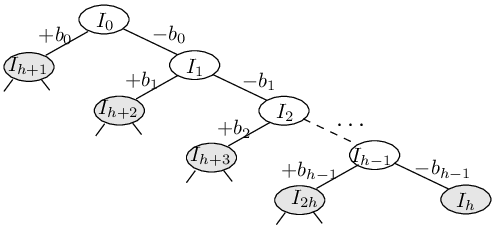}
\caption{A right-deep walk in the search tree $\gT$ starting from $I_0$
and stopping at a leaf node $I_h$}
\label{fig:search_tree_right}
\end{figure}

In \clem\ref{lemma:leafs_of_nonleaf_nodes}, we cannot bound
$\ell_{\gT}(I)$ by $\beta_k^{|I|}$ if no branching vertices have been
added to $I$. Nevertheless, we prove in the lemma below that
$\ell_{\gT}(I) < 2\cdot \beta_k^{|I|}$ holds for every node $I$ of
$\gT$, by using a non-inductive proving technique.

\begin{lemma}
	\label{lemma:leafs}
	For any node $I$ of $\gT$, it holds that $\ell_{\gT}(I) <
	2\cdot \beta_k^{|I|}$.
\end{lemma}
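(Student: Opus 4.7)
The plan is a non-inductive argument following the right-deep walk of \cfig\ref{fig:search_tree_right}, in which each left-sibling encountered along the way is bounded by \clem\ref{lemma:leafs_of_nonleaf_nodes} and the resulting contributions are summed geometrically. This way we never need to assume that $I$ itself has had a branching vertex added.

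Concretely, I would first define the path $(I_0, I_1, \ldots, I_h)$ with $I_0 := I$, where $I_{i+1}$ is the right child of $I_i$ in $\gT$ and $I_h$ is the first leaf reached. Each right branch removes the branching vertex $b_i$ from $I_i.g$ without adding anything to $S$, so $|I_{i+1}| \leq |I_i|-1$; hence $h \leq |I_0|$ and $|I_h|=0$, giving $\ell_{\gT}(I_h)=1$. For each $0 \leq i < h$, let $I'_i$ be the left child of $I_i$. By construction $b_i \in I'_i.S$, so $I'_i$ is a node to which at least one branching vertex has been added; \clem\ref{lemma:leafs_of_nonleaf_nodes} then applies and yields $\ell_{\gT}(I'_i) \leq \beta_k^{|I'_i|} \leq \beta_k^{|I_0|-i-1}$. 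Since every leaf in the subtree rooted at $I_0$ is either $I_h$ itself or lies in the subtree of exactly one $I'_i$,
\[
\ell_{\gT}(I_0) \;=\; 1 \;+\; \sum_{i=0}^{h-1} \ell_{\gT}(I'_i) \;\leq\; 1 \;+\; \sum_{j=1}^{h} \beta_k^{|I_0|-j} \;\leq\; 1 \;+\; \frac{\beta_k^{|I_0|}-1}{\beta_k-1}.
\]

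The last step is purely algebraic: show that $1 + (\beta_k^{|I_0|}-1)/(\beta_k-1) < 2\beta_k^{|I_0|}$. Clearing denominators (which is positive because $\beta_k > 1$) reduces this to $\beta_k^{|I_0|}(2\beta_k-3) > \beta_k-2$. Using the numerical values already computed in \clem\ref{lemma:leafs_of_nonleaf_nodes}, we have $\beta_1 \approx 1.619$ and $\beta_k$ is increasing in $k$ with $\beta_k < 2$, so $3/2 < \beta_k < 2$ for every $k \geq 1$. Hence the left side is strictly positive while the right side is strictly negative, and the inequality holds uniformly in $|I_0| \geq 0$ and $k \geq 1$. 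The edge case $|I_0|=0$ (when $I$ is a leaf) is covered automatically: the sum is empty and $\ell_{\gT}(I_0)=1<2=2\beta_k^0$.

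The most delicate step I expect is the leaf decomposition: one must check that the subtrees rooted at $I_h$ and at $I'_0,\ldots,I'_{h-1}$ are pairwise disjoint and together cover every leaf below $I_0$, so that the equality $\ell_{\gT}(I_0) = \ell_{\gT}(I_h) + \sum_i \ell_{\gT}(I'_i)$ is exact. This is where the particular choice of walk (always take the right child, stop at the first leaf) is crucial. Once that is in place, the algebraic closing is routine thanks to the gap $\beta_k > 3/2$, which is what lets the coefficient $2$ (rather than a larger constant) suffice.
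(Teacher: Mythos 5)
Your proposal is correct and follows essentially the same route as the paper's own proof: a right-deep walk from $I$ to the first leaf $I_h$, bounding each left sibling via \clem\ref{lemma:leafs_of_nonleaf_nodes} (which applies because each left child has its branching vertex added to $S$), summing the resulting geometric series to $1+\frac{\beta_k^{|I|}-1}{\beta_k-1}$, and closing with the fact that $\beta_k>3/2$. The only differences are notational (you index the left children as $I'_i$ rather than $I_{h+1},\ldots,I_{2h}$) and that you spell out the final algebraic reduction more explicitly than the paper does.
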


\begin{proof}
If $I$ is a leaf node, the lemma is trivial. For a non-leaf node
$I_0$, let's consider the path $(I_0,I_1,\ldots,I_h)$ that starts from
$I_0$, always visits the right child in the search tree $\gT$, and
stops at a leaf node $I_h$; see \cfig\ref{fig:search_tree_right}
for an example. Then, it holds that
\[
	\forall i \in [h+1, 2h],
		|I_i| \leq |I_{i-h-1}|-1 \leq |I_0| + h-i
\]
and $|I_h| = 0$. Moreover, for each $i \in [h+1,2h]$, $I_i$ has at least
one branching vertex (\eg, $b_{i-h-1}$) being added, and thus according
to \clem\ref{lemma:leafs_of_nonleaf_nodes}, it satisfies $\ell_{\gT}(I_i) \leq
\beta_k^{|I_i|}$. Consequently,
\begin{align*}
	\ell_{\gT}(I_0) & = \ell_{\gT}(I_{h+1}) + \ell_{\gT}(I_{h+2}) +
	\cdots + \ell_{\gT}(I_{2h}) +
	\ell_{\gT}(I_h) \nonumber \\
	& \leq \beta_k^{|I_{h+1}|} + \beta_k^{|I_{h+2}|} + \cdots +
	\beta_k^{|I_{2h}|} + \beta_k^{|I_h|} \nonumber \\
	& \leq \beta_k^{|I_0|-1} + \beta_k^{|I_0|-2} + \cdots +
	\beta_k^{|I_0|-h} + 1 \nonumber \\
	& \textstyle \leq \frac{\beta_k^{|I_0|}-1}{\beta_k-1} + 1 < 2
	\cdot
	\beta_k^{|I_0|}
\end{align*}
The last inequality follows from the fact that $\beta_k> 1.5, \forall k
\geq 1$.
\end{proof}

Following \clem\ref{lemma:leafs}, Line~7 of \calg\ref{alg:kdct} runs in
$\bigo^*(\beta_k^n)$ time.
What remains is to bound $|I|$ for Line~6 of \calg\ref{alg:kdct}.
Let $f$ denote the vertex obtained at Line~3 of \calg\ref{alg:kdct}, 
$g$ be the subgraph extracted at Line~5, and $t$ be the number of
$f$'s non-neighbors in $g$. It holds that $d_g(f) = |A| \leq
\alpha$ since a degeneracy ordering is used for extracting $g$, $t \leq \alpha (\Delta-1)$, and $|V(g)| \leq d_g(f)+t+1 \leq
\alpha\Delta+1$; here, $\alpha$ is the degeneracy of $G$ and $\Delta$ is
the maximum degree of $G$. We prove the time complexity of \kdct in the
theorem below.

\begin{theorem}
Given a graph $G$ and an integer $k$, \kdct runs in $\bigo(n\times
(\alpha\Delta)^2 \times \beta_k^{\alpha\Delta})$ time when $\omega_k(G)
\geq k+2$, and in $\bigo(m\times \beta_k^n)$ time otherwise.
\end{theorem}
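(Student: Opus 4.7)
The plan is to establish the two bounds by handling the cases $\omega_k(G) \geq k+2$ and $\omega_k(G) < k+2$ separately, applying \clem\ref{lemma:leafs} to each \bbsearch call and multiplying by a polynomial per-node cost.

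For the case $\omega_k(G) \geq k+2$, I would first show that Stage-II (Line~7) is never executed. Let $C^*$ be any maximum $k$-defective clique and $v_i$ its earliest vertex in the degeneracy ordering. By \clem\ref{lemma:diameter_two}, every $v_j \in C^* \setminus \{v_i\}$ either lies in $N(v_i)$ or shares a common neighbor $v_c \in C^*$ with $v_i$; since $v_i$ is the earliest $C^*$-vertex in the ordering, both $v_c$ and $v_j$ belong to $\{v_i, \ldots, v_n\}$, and hence $C^* \subseteq N[A] \cap \{v_i, \ldots, v_n\} = V(g_{v_i})$. Consequently $\bbsearch(g_{v_i}, \{v_i\})$ discovers a $k$-defective clique of size $|C^*| \geq k+2$, so the test $|C^*| < k+1$ fails and Stage-II is skipped. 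To bound the Stage-I cost for each $v_i$, I would invoke the pre-theorem observation $|V(g_{v_i})| \leq \alpha\Delta + 1$, so the initial \bbsearch instance has size at most $\alpha\Delta$. \clem\ref{lemma:leafs} then bounds its leaves by $2\beta_k^{\alpha\Delta}$ and thus its entire tree by $\bigo(\beta_k^{\alpha\Delta})$ nodes. Charging $\bigo((\alpha\Delta)^2)$ per node for applying {\bf RR1}, {\bf RR2}, and {\bf BR} over $g_{v_i}$ gives $\bigo((\alpha\Delta)^2 \beta_k^{\alpha\Delta})$ per $v_i$, and $\bigo(n \cdot (\alpha\Delta)^2 \cdot \beta_k^{\alpha\Delta})$ in total.

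For the case $\omega_k(G) < k+2$, the Stage-I bound above still applies. Stage-II now invokes $\bbsearch(G, \emptyset)$, whose root instance has size $n$; \clem\ref{lemma:leafs} bounds its leaves by $2\beta_k^n$ and hence its tree by $\bigo(\beta_k^n)$ nodes. Each such node costs $\bigo(m)$ to apply {\bf RR1}, {\bf RR2}, and {\bf BR} over the whole graph, so Stage-II contributes $\bigo(m \cdot \beta_k^n)$. In the typical regime $\alpha\Delta \leq n$ this absorbs the Stage-I term (since $\beta_k^{\alpha\Delta} \leq \beta_k^n$ and the remaining polynomial factor is swallowed by $m$), yielding the claimed $\bigo(m \cdot \beta_k^n)$.

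The main obstacle is cleanly showing $C^* \subseteq V(g_{v_i})$ in the first case: one must identify the diameter-two witness $v_c \in C^*$ guaranteed by \clem\ref{lemma:diameter_two}, verify $v_c \in A$ (using that $v_i$ is the earliest $C^*$-vertex in the degeneracy ordering), and then conclude $v_j \in N(v_c) \subseteq N[A]$, together with the analogous easy inclusion for $v_j \in N(v_i)$. The remaining work---pinning down the per-node polynomial as $\bigo((\alpha\Delta)^2)$ for Stage-I and $\bigo(m)$ for Stage-II---is routine, since each reduction rule and the branching rule simply scans $V(g) \setminus S$ along with the relevant adjacencies of the active subgraph.
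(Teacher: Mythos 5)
Your proposal is correct and follows essentially the same route as the paper: bound each Stage-I root instance by $|I|\leq \alpha\Delta$, apply \clem\ref{lemma:leafs} to get $\bigo(\beta_k^{\alpha\Delta})$ leaf nodes (resp.\ $\bigo(\beta_k^n)$ for Stage-II), and charge a polynomial cost per search-tree node. The only substantive addition is your explicit diameter-two argument that $C^*\subseteq V(g_{v_i})$ for the earliest $C^*$-vertex $v_i$, which the paper relegates to the algorithm's correctness discussion rather than spelling out in this proof; it is a welcome (and correct) piece of rigor justifying why Line~7 is skipped.
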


\begin{proof}
Each invocation of \bbsearch at Line~6 of \calg\ref{alg:kdct} takes
$\bigo( (\alpha\Delta)^2 \times \beta_k^{\alpha\Delta})$ time, since the
root node $I$ of the search tree satisfies $|I| \leq \alpha\Delta$ and
each node of the search tree (\ie, Lines~9--11 of \calg\ref{alg:kdct})
takes $\bigo(|E(g)|) = \bigo( (\alpha\Delta)^2)$ time. When
$\omega_k(G)\geq k+2$, the condition of Line~7
is not satisfied and thus \calg\ref{alg:kdct} runs in $\bigo(n\times
(\alpha\Delta)^2 \times \beta_k^{\alpha\Delta})$ time. Otherwise, Line~7
takes $\bigo(m\times \beta_k^n)$ time and \calg\ref{alg:kdct} runs in
$\bigo(n\times (\alpha\Delta)^2 \times \beta_k^{\alpha\Delta} + m\times
\beta_k^n)$; for simplicity, we abbreviate this time complexity as
$\bigo(m\times \beta_k^n)$ since it is usually the dominating term.
\end{proof}

\subsubsection{Further Reduce the Exponent}
\label{sec:reduce_exponent}

In the following, we show that when $\omega_k(G) \geq k+2$, the exponent
of the time complexity can be further reduced by a more refined analysis.
Let $I_0=(g_0,S_0)$ be the root of $\gT$ for Line~6 of
\calg\ref{alg:kdct}; recall that, $f \in S_0$,
$d_{g_0}(f) \leq d_g(f) \leq \alpha$ and $|I_0| \leq \alpha+t$, where
$f$ is the vertex obtained at Line~3 of \calg\ref{alg:kdct} and $t$ is
number of $f$'s non-neighbors in the subgraph $g$ extracted for $f$.
Let's consider the subtree of $\gT$ formed by
starting a depth-first-search from $I_0$ and backtracking once the
path from $I_0$ to it has either $t$ total edges or $k$ positive edges (\ie,
labeled as ``+''); see
the shaded subtree in \cfig\ref{fig:full_tree} for an illustration of 
$t=4$ and $k=2$. Let $\gL$ be the set of leaf nodes of this subtree. Then,
the number of leaf nodes of $\gT$ satisfies
\[
	\ell_{\gT}(I_0) \leq \sum_{I \in \gL} \ell_{\gT}(I)
\]
We bound $|\gL|$ and $\ell_{\gT}(I)$ for $I \in \gL$ in the following
two lemmas.

\begin{lemma}
	$|\gL| = \bigo(t^k)$ where $t$ is the number of $f$'s non-neighbors
	in $g$.
\end{lemma}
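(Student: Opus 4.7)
\proofsketch
The plan is to bound $|\gL|$ by counting root-to-leaf paths in the truncated subtree; since distinct leaves of a tree are in bijection with distinct root-to-leaf paths, and each such path obeys both stopping criteria (length at most $t$ and at most $k$ ``+'' edges), it suffices to bound the number of valid label sequences.

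The key structural observation I would establish is that at every internal node of the truncated subtree, at least one of its two outgoing edges is labeled ``+''. This should follow from the behavior of {\bf BR}: the vertex $f$ is in $S$ throughout the search, and as long as $f$ has a non-neighbor in $V(g_i) \setminus S_i$, {\bf BR} must branch on such a vertex $b$ (since $b$ then has a non-neighbor in $S_i$, namely $f$). The LEFT branch that adds $b$ to $S_i$ strictly increases $|\overline{E}(S)|$ by at least one and is therefore labeled ``+''. The truncation at depth $t$ aligns naturally with the supply of $f$'s non-neighbors: $f$ has at most $t$ non-neighbors in $V(g_0) \setminus S_0$ to begin with, and each branching step removes one of them (either by inclusion in $S$ or by exclusion from $g$), so $f$ retains at least one non-neighbor throughout the first $t$ levels.

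Given this invariant, a standard binomial counting argument applies. Let $N(d, p)$ denote the maximum number of leaves in a binary tree of depth at most $d$ in which every root-to-leaf path contains at most $p$ ``+'' edges and every internal node has at least one ``+'' child. The invariant yields the recursion
\[
    N(d, p) \leq N(d-1, p-1) + N(d-1, p),
\]
with boundary conditions $N(0, p) = 1$ and $N(d, 0) = 1$, which by induction solves to $N(d, p) = \sum_{j=0}^{p}\binom{d}{j}$. Setting $d = t$ and $p = k$ gives
\[
    |\gL| \leq N(t, k) = \sum_{j=0}^{k}\binom{t}{j} = \bigo(t^k).
\]

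The main obstacle will be rigorously handling corner cases in the structural invariant: in particular, (i)~reductions {\bf RR1} and {\bf RR2} may remove or absorb multiple vertices in a single step, potentially consuming several of $f$'s non-neighbors at once and threatening to exhaust them before depth $t$; and (ii)~{\bf BR} may be forced to branch on a vertex without any non-neighbor in $S$ when all remaining vertices of $V(g_i) \setminus S_i$ are fully adjacent to $S_i$. Addressing these likely requires either a carefully chosen definition of ``+'' edges that absorbs these scenarios, or a case analysis arguing they occur too rarely to affect the asymptotic bound.
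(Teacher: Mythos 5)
Your counting core is sound and lands exactly where the paper does: the recursion $N(d,p)\leq N(d-1,p-1)+N(d-1,p)$ with $N(0,p)=N(d,0)=1$ solves to $\sum_{j=0}^{p}\binom{d}{j}$, which for $d=t$, $p=k$ is precisely the paper's bound $\sum_{i=0}^{k}\binom{t}{i}=\bigo(t^k)$. The paper obtains the same bound non-recursively, by mapping each leaf of $\gL$ injectively to the set of levels at which its root-to-leaf path takes a positive edge --- a subset of $\{1,\ldots,t\}$ of size at most $k$. These are the same argument in two dialects.

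The problem is the ``key structural observation'' on which you make the recursion depend. In the paper, a ``$+$'' edge is defined \emph{structurally}: it is the edge to the left child, i.e., the branch that adds the branching vertex to $S$; the right (exclusion) edge is ``$-$''. Every internal node of the binary search tree $\gT$ therefore has exactly one ``$+$'' child by construction, so the invariant you set out to establish holds trivially --- no appeal to \textbf{BR}, no argument that $f$ retains non-neighbors through the first $t$ levels, and no corner cases from \textbf{RR1}/\textbf{RR2}. You instead read ``$+$'' semantically, as ``this step strictly increases $|\overline{E}(S)|$'', and then try to show that the left edge always qualifies. Under that reading the invariant is genuinely false --- your own obstacle (ii) is a real counterexample: when every vertex of $V(g)\setminus S$ is fully adjacent to $S$, \textbf{BR} branches on an arbitrary vertex and the left edge contributes no new non-edge --- so the proof as written does not close; worse, the set $\gL$ you would then be counting (truncated after $k$ semantic ``$+$'' steps) is not the set the lemma is about. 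The fix is simply to adopt the structural labelling, after which your recursion applies verbatim. The considerations you raise about \textbf{BR}, \textbf{RR1} and $f$'s non-neighbors are not wasted, but they belong to the companion result bounding $|I|$ and $|V(I.g)|$ for $I\in\gL$ (\clem\ref{lemma:bound_instance_size}), not to this counting lemma.
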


\begin{proof}
To bound $|\gL|$, we observe that the search tree $\gT$ is a full binary
tree with each node having a positive edge to its left child and a
negative edge to its right child. Thus, we can label every edge by its
level in the tree (see \cfig\ref{fig:full_tree}), and for each node $I
\in \gL$, we associate with it a set of numbers corresponding to the
levels of the positive edges from $I_0$ to $I$. Then, it is easy to see
that each node of $\gL$ is associated with a distinct subset, of size at
most $k$, of $\{1,2,\ldots,t\}$. Consequently, $|\gL| \leq
\sum_{i=0}^{k} {t\choose i} = \bigo(t^{k})$~\cite{Book12:Mohri}.
\end{proof}

\begin{lemma}
	\label{lemma:bound_instance_size}
	For all $I \in \gL$, it holds that $|I| \leq \alpha$ and $|V(I.g)|
	\leq \alpha+k+1$.
\end{lemma}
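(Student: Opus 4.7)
My plan is to handle leaves of $\gL$ by case analysis on how the path from $I_0$ to $I$ terminates. Let $a$ and $b$ denote the numbers of positive and negative edges on this path; by the stopping rule, either $a+b=t$, or $a=k$, or $I$ is a natural leaf of $\gT$ (i.e., $I.g$ is itself a $k$-defective clique). Write $N_g(f)$ and $\overline{N}_g(f)$ for the neighbors and non-neighbors of $f$ in the subgraph $g$ extracted at Line~5; then $|N_g(f)|\le\alpha$ (degeneracy), $|\overline{N}_g(f)|=t$, and $|V(g)|\le 1+\alpha+t$. Since reductions only shrink $V(\cdot)$, the root also satisfies $|V(g_0)|\le 1+\alpha+t$ and $|I_0|\le\alpha+t$.

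In the natural-leaf case, $I.g$ is a $k$-defective clique containing $f$; adopting the paper's convention that Line~10 is treated as reducing $I.S$ to $V(g')$ forces $|I|=0\le\alpha$, and the defective-clique constraint gives $|\overline{N}_g(f)\cap V(I.g)|\le k$ (each non-neighbor of $f$ contributes the non-edge $(f,\cdot)$), so $|V(I.g)|=1+|N_g(f)\cap V(I.g)|+|\overline{N}_g(f)\cap V(I.g)|\le 1+\alpha+k$.

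If $a+b=t$, both bounds follow from a direct counting argument. Each edge of the path decreases $|I|$ by at least one, so $|I|\le|I_0|-(a+b)\le\alpha$. Moreover, positive edges and the reduction rules never increase $|V(g)|$ while each negative edge decreases it by one, so $|V(I.g)|\le|V(g_0)|-b\le 1+\alpha+t-(t-a)=1+\alpha+a\le 1+\alpha+k$ using $a\le k$.

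If instead $a=k$ (with $a+b\le t$), I would show that $V(I.g)\setminus I.S$ contains no non-neighbor of $f$. Assume the branching rule {\bf BR} is implemented to prefer a non-neighbor of $f$ whenever such a vertex exists in $V(g)\setminus S$---a valid realization of {\bf BR} since $f\in S$ makes every non-neighbor of $f$ a legal {\bf BR} candidate. Under this convention, each of the $k$ positive edges adds some $u\in\overline{N}_g(f)$ to $I.S$ and thereby injects the non-edge $(f,u)$ into $\overline{E}(I.S)$; hence $|\overline{E}(I.S)|\ge k$, and combined with $|\overline{E}(I.S)|\le k$ we obtain equality. Any $u\in(V(I.g)\setminus I.S)\cap\overline{N}_g(f)$ would then satisfy $|\overline{E}(I.S\cup u)|\ge k+1$, contradicting the exhaustive application of {\bf RR1} at $I$. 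Therefore $V(I.g)\setminus I.S\subseteq N_g(f)$, which yields $|I|\le|N_g(f)|\le\alpha$, and the decomposition $|V(I.g)|=1+|N_g(f)\cap V(I.g)|+|\overline{N}_g(f)\cap V(I.g)|$ together with $|\overline{N}_g(f)\cap V(I.g)|=|\overline{N}_g(f)\cap I.S|\le|\overline{E}(I.S)|=k$ gives $|V(I.g)|\le 1+\alpha+k$. The main obstacle lies in this last case: the argument hinges on each positive edge contributing a new non-edge involving $f$ to $\overline{E}(I.S)$, and the {\bf BR} preference for non-neighbors of $f$ is the cleanest way to guarantee this---without it one would have to separately treat the degenerate configuration in which every vertex of $V(g)\setminus S$ is already fully adjacent to $S$ at a positive-edge step.
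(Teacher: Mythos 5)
Your decomposition into three termination cases mirrors the paper's (the paper in fact omits the natural-leaf case, which you handle correctly, and your counting in the $a+b=t$ case is the same as the paper's). The problem is the $a=k$ case, which is where the content of the lemma lies. Your argument needs each of the $k$ positive edges to inject a \emph{distinct non-edge incident to $f$} into $\overline{E}(I.S)$, and you try to force this by refining {\bf BR} to prefer non-neighbors of $f$. This fails on two counts. First, even under your refined rule, a positive-edge step contributes a non-edge incident to $f$ only if a non-neighbor of $f$ is still present in $V(g)\setminus S$ at that step; if all non-neighbors of $f$ have already been absorbed into $S$ or deleted, {\bf BR} will branch on a neighbor of $f$ (possibly one with a non-neighbor elsewhere in $S$), and the count $|\overline{E}(I.S)|\geq k$ is no longer justified. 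The ``degenerate configuration'' you flag at the end (every vertex fully adjacent to $S$) is not the configuration that breaks the argument. Second, the lemma is a statement about the search tree of the algorithm as specified, where {\bf BR} may break ties arbitrarily among vertices having a non-neighbor in $S$; proving it only for one particular refinement does not establish the claimed complexity for \kdct as written.

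Both issues disappear if you argue by contradiction, as the paper does: suppose some $u\in (V(I.g)\setminus I.S)\cap\overline{N}_g(f)$ survives. Since vertices are never re-inserted, $u$ lies in $V(g_i)\setminus S_i$ at \emph{every} node $I_i$ on the path and has the non-neighbor $f\in S_i$ there; hence at every positive-edge step the first clause of {\bf BR} fires and the chosen branching vertex has a non-neighbor in $S_i$, so each of the $k$ positive edges adds at least one new non-edge (each involving the newly added vertex, so all distinct). Thus $|\overline{E}(I.S)|\geq k$, whence $|\overline{E}(I.S\cup u)|\geq k+1$ and {\bf RR1} would have removed $u$ --- a contradiction. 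This requires no modification of {\bf BR} and no auxiliary sub-case. (Your own route can be repaired by adding the observation that if at some step no non-neighbor of $f$ remains in $V(g_i)\setminus S_i$, then by monotonicity none remains in $V(I.g)\setminus I.S$ and the conclusion holds vacuously; but you would still be analyzing a modified algorithm rather than the one in the paper.)
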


\begin{proof}
Let's consider the path $(I_0,I_1,\ldots,I_{p-1},I_p=I)$ from $I_0$ to
$I$.
If there are $k$ positive edges on the path, then all vertices of
$V(g_p)\setminus S_p$ must be adjacent to $f$, and thus $|I_p| \leq
\alpha$ and $|V(g_p)| \leq \alpha+k+1$. The latter holds since all of
$f$'s non-neighbors in $g$ are in $S$ and thus of quantity at most $k$.
The former can be shown by
contradiction. Suppose $V(g_p)\setminus
S_p$ contains a non-neighbor of $f$, then adding each of the $k$
branching vertices (on the positive edges of the path) must bring at least one
non-edge to $S_p$, due to the branching rule {\bf BR}; then \textbf{RR1} will
remove all non-neighbors of $f$ from $V(g_p)\setminus S_p$,
contradiction.

Otherwise, there are at most $k-1$ positive edges on the path and $p=t$.
Then, $|I_p| \leq |I_0|-t \leq \alpha$. Also, there are at least
$t-k+1$ negative edges on the path and thus $|V(g_p)| \leq |V(g_0)| -
(t-k+1) \leq t+\alpha+1-(t-k+1) = \alpha+k$.
\end{proof}

Consequently, the number of leaf nodes of $\gT$ is $\bigo(
(\alpha\Delta)^k \beta_k^\alpha)$ and the time complexity of \kdct
follows.

\begin{theorem}
\label{theorem:time_complexity}
Given a graph $G$ and an integer $k$, \kdct runs in $\bigo(n\times
(\alpha \Delta)^{k+2} \times \beta_k^{\alpha})$ time when $\omega_k(G)
\geq k+2$.
\end{theorem}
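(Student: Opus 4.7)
The plan is to assemble the theorem from pieces already in hand: the leaf-count bound $|\gL| = \bigo(t^k)$ just proved, the per-leaf instance-size bound $|I| \leq \alpha$ from \clem\ref{lemma:bound_instance_size}, and the universal subtree bound $\ell_{\gT}(I) < 2\beta_k^{|I|}$ from \clem\ref{lemma:leafs}. The only nontrivial new ingredient is showing that, under the assumption $\omega_k(G) \geq k+2$, the fallback call at Line~7 of \calg\ref{alg:kdct} is never triggered, so the runtime is determined entirely by the $n$ invocations on Line~6.

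First I would argue the Stage-II skipping. Let $C$ be any maximum $k$-defective clique of $G$; by hypothesis $|C| \geq k+2$, so by the diameter-two property (\clem\ref{lemma:diameter_two}) every two vertices of $C$ are within distance two in $G[C]$, hence in $G$. Let $v_j$ be the earliest vertex of $C$ in the degeneracy ordering; then every other vertex of $C$ lies in $N[N(v_j)] \cap \{v_j,\ldots,v_n\}$ and is therefore contained in $g_{v_j}$, so the invocation $\bbsearch(g_{v_j},\{v_j\})$ will discover $C$ and update $C^*$. After the loop completes, $|C^*| \geq k+2$, the guard $|C^*| < k+1$ on Line~7 fails, and Line~7 is never executed.

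Next I would bound the cost of a single outer iteration. For the call $\bbsearch(g_{v_i},\{v_i\})$ let $I_0$ be the root of the search tree $\gT$. Combining the two preceding lemmas,
\[
\ell_{\gT}(I_0) \;\leq\; \sum_{I \in \gL} \ell_{\gT}(I) \;\leq\; |\gL| \cdot \max_{I \in \gL}\ell_{\gT}(I) \;\leq\; \bigo(t^{k}) \cdot 2\beta_k^{\alpha} \;=\; \bigo\bigl((\alpha\Delta)^{k}\beta_k^{\alpha}\bigr),
\]
where $t \leq \alpha(\Delta-1)$ since the degeneracy ordering caps $d_g(f) \leq \alpha$ and each such neighbor contributes at most $\Delta-1$ new non-neighbors. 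Because $|V(g_{v_i})| \leq \alpha\Delta+1$, each node of $\gT$ (the two reduction rules, the branching choice, and the constant-time bookkeeping) runs in $\bigo(|E(g)|) = \bigo((\alpha\Delta)^2)$ time. Hence one invocation costs $\bigo((\alpha\Delta)^{k+2}\beta_k^{\alpha})$, and summing over the $n$ choices of $v_i$ yields the stated bound.

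The main obstacle is really just the Stage-II-skipping step: one must verify that the two-hop expansion in Line~5 is large enough to contain the whole maximum clique, which hinges on invoking \clem\ref{lemma:diameter_two} at the correct vertex of the degeneracy ordering. The remaining arithmetic is pure composition of the earlier lemmas, and no new combinatorial argument is needed.
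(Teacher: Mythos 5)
Your proposal is correct and follows essentially the same route as the paper: combine $|\gL|=\bigo(t^k)$ with $|I|\leq\alpha$ from \clem\ref{lemma:bound_instance_size} and the bound $\ell_{\gT}(I)<2\beta_k^{|I|}$ from \clem\ref{lemma:leafs} to get $\bigo((\alpha\Delta)^k\beta_k^{\alpha})$ leaves per invocation, multiply by the $\bigo((\alpha\Delta)^2)$ per-node cost and the $n$ outer iterations, and note that Line~7 is skipped when $\omega_k(G)\geq k+2$. Your explicit justification that the earliest vertex $v_j$ of the maximum defective clique in the degeneracy ordering yields a subgraph $g_{v_j}$ containing the whole clique (via the diameter-two property with the intermediate vertex lying in $C$, hence after $v_j$) is the same argument the paper relies on, just spelled out in more detail.
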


\subsubsection{Compared with \kdc} 

Our algorithm \kdct improves the time complexity of
\kdc~\cite{SIGMOD24:Chang} from two aspects.
Firstly, we improve the base of the exponential time complexity from
$\gamma_k$ to $\beta_k = \gamma_{k-1}$.
This is achieved by using different analysis
techniques for the nodes that already have branching vertices being
added (\ie, \clem\ref{lemma:leafs_of_nonleaf_nodes}) and for those that
do not (\ie, \clem\ref{lemma:leafs}).
Without this separation, we can only bound the length $q$
of the path $(I_0,\ldots,I_q)$ by $k+1$ instead of $k$ that is proved in
\clem\ref{lemma:bound_q}.
Also note that, \clem\ref{lemma:leafs} is non-inductive and has a
coefficient $2$ in the bound; if we use induction in the proof of
\clem\ref{lemma:leafs}, then the coefficient will become bigger and
bigger and become exponential when going up the tree.
Secondly, we improve the exponent of the time complexity from $n$ to
$\alpha$ when $\omega_k(G) \geq k+2$; note that, most real graphs
have $\omega_k(G)\geq k+2$. This is achieved by our two-stage
algorithm that utilizes the diameter-two property for pruning in
Stage-I, as well as our refined analysis in \csec\ref{sec:reduce_exponent}.

\subsection{Parameterize by the Degeneracy Gap}
\label{sec:degen_gap}

In this subsection, we prove that \kdct, with slight modification, runs
in $\bigo^*( (\alpha\Delta)^{k+2} (k+1)^{\alpha+k+1-\omega_k(G)})$ time
by using the degeneracy gap $\alpha+k+1-\omega_k(G)$ parameterization;
this is better than $\bigo^*( (\alpha\Delta)^{k+2} \beta_k^\alpha)$ when
$\omega_k(G)$ is close to $\alpha+k+1$, the degeneracy-based upper bound
of $\omega_k(G)$.
Specifically, let's consider the problem of testing whether $G$ has a
$k$-defective clique of size $\tau$.
To do so, we truncate the
search tree $\gT$ by cutting the entire subtree rooted at node $I$ if
$|V(I.g)| \leq \tau$; that is, we terminate $\bbsearch$ once $|V(g')|
\leq \tau$ after Line~10 of \calg\ref{alg:kdct}. Let $\gT'$ be the
truncated version of $\gT$. We first bound the number of leaf nodes of
$\gT'$ in the following two lemmas, in a similar fashion as
Lemmas~\ref{lemma:leafs_of_nonleaf_nodes} and \ref{lemma:leafs}.

\begin{lemma}
	\label{lemma:leafs_of_nonleaf_nodes2}
	For any node $I$ of $\gT'$ that has at least one branching vertex
	being added, it holds that $\ell_{\gT'}(I) \leq (k+1)^{|V(I.g)|-\tau}$.
\end{lemma}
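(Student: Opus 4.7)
The plan is to mirror the inductive proof of \clem\ref{lemma:leafs_of_nonleaf_nodes}, but swap the measure $|I|$ for $|V(I.g)|-\tau$ and correspondingly swap the left-deep path: instead of stopping the first time $|I|$ drops by two, I stop the first time $|V(g)|$ drops by one. This is precisely the $q$-path $(I_0,I_1,\ldots,I_q)$ already analyzed in \clem\ref{lemma:bound_q}, which furnishes the crucial bound $q\leq k$. That yields exactly $q+1\leq k+1$ recursive subproblems, matching the base $k+1$ on the right-hand side of the target.

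For the inductive step I would first note, by the same reasoning behind equation~(\ref{eq:path}), that the reduction rules are silent along the path, so $V(g_i)=V(g_0)$ for all $i\in[0,q-1]$, while $|V(g_q)|\leq|V(g_0)|-1$ by the stopping rule. Collapsing the left-deep path into a multi-way branching, exactly as in \clem\ref{lemma:leafs_of_nonleaf_nodes}, gives
\[
\ell_{\gT'}(I_0) \;=\; \ell_{\gT'}(I_q) \;+\; \sum_{i=1}^{q}\ell_{\gT'}(R_{i-1}),
\]
where $R_{i-1}$ denotes the right child of $I_{i-1}$. For every $i\in[1,q]$, excluding $b_{i-1}$ gives $|V(R_{i-1}.g)|\leq|V(g_{i-1})|-1=|V(g_0)|-1$, and $I_q$ itself also satisfies $|V(g_q)|\leq|V(g_0)|-1$. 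Each of these $q+1$ roots still has at least one branching vertex added (the right children inherit $I_0$'s branching history, and $I_q$ has additionally had $b_0,\ldots,b_{q-1}$ added), so the inductive hypothesis applies and bounds every term by $(k+1)^{|V(g_0)|-1-\tau}$. Summing and using $q+1\leq k+1$ then closes the induction:
\[
\ell_{\gT'}(I_0) \;\leq\; (q+1)(k+1)^{|V(g_0)|-1-\tau} \;\leq\; (k+1)^{|V(g_0)|-\tau}.
\]

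The main obstacle I anticipate lies at the boundary of the truncation. A leaf of $\gT'$ is either a genuine $\bbsearch$ leaf or is cut because $|V(I.g)|\leq\tau$; in the first case the bound is trivial, but in the second the exponent $|V(I.g)|-\tau$ can be strictly negative whenever \textbf{RR1} drops several vertices in a single reduction step, which would push $(k+1)^{|V(I.g)|-\tau}$ below $\ell_{\gT'}(I)=1$. The clean way around this is to prove by induction the slightly stronger statement with $\max\{1,(k+1)^{|V(I.g)|-\tau}\}$ on the right-hand side; any extra slack is absorbed because the closing inequality $q+1\leq k+1$ still has room to spare. Beyond this boundary bookkeeping every step is mechanical and depends only on \clem\ref{lemma:bound_q}.
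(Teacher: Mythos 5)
Your proposal is correct and follows essentially the same route as the paper's proof: induction along the left-deep path $(I_0,\ldots,I_q)$ stopped at the first shrink of the vertex set, \clem\ref{lemma:bound_q} to get $q\leq k$, and the count $q+1\leq k+1$ against the base $k+1$. The only difference is your treatment of the truncation boundary — the paper's base case simply asserts $|V(I.g)|=\tau$ at leaves, glossing over leaves where reductions push $|V(I.g)|$ strictly below $\tau$, and your $\max\{1,(k+1)^{|V(I.g)|-\tau}\}$ strengthening patches this cleanly without changing the argument.
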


\begin{proof}
We prove the lemma by induction. For the base case that $I$ is a leaf
node, we have $|V(I.g)| = \tau$ and thus $\ell_{\gT'}(I) = (k+1)^0 =
(k+1)^{|V(I.g)|-\tau}$.
For a non-leaf node $I_0$ that has at least one branching vertex being
added, let's consider the path
$(I_0,I_1,\ldots,I_q)$ that starts from $I_0$, always visits the left
child in the search tree $\gT'$, and stops at $I_q$ once $q \geq 1$ and 
$|V(g_q)| < |V(g_{q-1})|$; this is the same one as studied at the
beginning of \csec\ref{sec:time_complexity}, see \cfig\ref{fig:left_deep}.
According to \clem\ref{lemma:bound_q}, we have $q \leq k$. It is
trivial that
\[
	\forall i \in [q,2q],\quad |V(g_i)| < |V(g_0)|
\]
where $I_{q+1},I_{q+2},\dots,I_{2q}$ are the right child of
$I_0,I_1,\dots,I_{q-1}$, respectively, as illustrated in
\cfig\ref{fig:left_deep}. 
Thus,
\begin{align*}
	\ell_{\gT'}(I_0) & = \ell_{\gT'}(I_q) + \ell_{\gT'}(I_{q+1}) +
	\ell_{\gT'}(I_{q+2}) + \dots + \ell_{\gT'}(I_{2q}) \\
	& \leq (q+1) \cdot (k+1)^{|V(g_0)|-1-\tau} \leq
	(k+1)^{|V(g_0)|-\tau} 
\end{align*}
\end{proof}

\begin{lemma}
	\label{lemma:leafs_degen_gap}
	For any node $I$ of $\gT'$, it holds that
	$\ell_{\gT'}(I) < 2\cdot (k+1)^{|V(I.g)|-\tau}$.
\end{lemma}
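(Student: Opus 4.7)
The plan is to mirror the non-inductive argument used in the proof of \clem\ref{lemma:leafs}, but with base $(k+1)$ in place of $\beta_k$, size measure $|V(I.g)|-\tau$ in place of $|I|$, and an appeal to \clem\ref{lemma:leafs_of_nonleaf_nodes2} in place of \clem\ref{lemma:leafs_of_nonleaf_nodes}. When $I$ is a leaf the bound is immediate, since $\ell_{\gT'}(I)=1$. For a non-leaf $I_0$, I would form the right-deep walk $(I_0,I_1,\ldots,I_h)$ in $\gT'$ that always descends to the right child and stops at the first leaf; this is well defined and finite because each right step strictly decreases $|V(g)|$, and once $|V(g)|\le\tau$ the truncation fires and we arrive at a leaf. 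Collapsing the right spine into a super-node gives the standard multi-way decomposition
\[
  \ell_{\gT'}(I_0) \;=\; \ell_{\gT'}(I_h) \;+\; \sum_{i=h+1}^{2h} \ell_{\gT'}(I_i),
\]
where $I_{h+1},\ldots,I_{2h}$ are the left children of $I_0,\ldots,I_{h-1}$ respectively and $\ell_{\gT'}(I_h)=1$.

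Next I would invoke \clem\ref{lemma:leafs_of_nonleaf_nodes2}. For each $i\in[h+1,2h]$, $I_i$ is obtained from its parent $I_{i-h-1}$ by adding the branching vertex $b_{i-h-1}$ to $S$, so $I_i$ has at least one branching vertex added and the lemma yields $\ell_{\gT'}(I_i)\le (k+1)^{|V(I_i.g)|-\tau}$. To bound $|V(I_i.g)|$, I would observe that a right step removes the branching vertex from $g$, giving $|V(I_{j+1}.g)|\le |V(I_j.g)|-1$ along the spine, while a left step merely adds a vertex to $S$ and applies reduction rules, which can only delete vertices from $g$. Composing the two observations yields $|V(I_i.g)|\le |V(I_0.g)|-(i-h-1)$ for $i\in[h+1,2h]$. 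Writing $V:=|V(I_0.g)|-\tau$, the decomposition becomes
\[
  \ell_{\gT'}(I_0) \;\le\; 1 + \sum_{j=0}^{h-1}(k+1)^{V-j}.
\]

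Finally, I would overestimate the finite geometric sum by $\tfrac{(k+1)^{V+1}-(k+1)}{k}$, so that the target inequality reduces to $\tfrac{(k+1)^{V+1}-1}{k} < 2(k+1)^V$. Multiplying through by $k$ and rearranging rewrites this as $(1-k)(k+1)^V < 1$, which holds for every $k\ge 1$ and every $V\ge 0$ because the left-hand side is non-positive. The most delicate point I anticipate is keeping the argument genuinely non-inductive rather than accidentally re-using the bound on $I_0$ itself: each $I_i$ we recurse into must be known to have a branching vertex added so that \clem\ref{lemma:leafs_of_nonleaf_nodes2} applies, exactly mirroring the role of the right spine in \clem\ref{lemma:leafs}. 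Beyond that, the work is routine arithmetic.
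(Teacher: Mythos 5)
Your proposal is correct and follows essentially the same route as the paper's own proof: the same right-deep spine decomposition ending at a leaf $I_h$, the same appeal to \clem\ref{lemma:leafs_of_nonleaf_nodes2} for the left children hanging off the spine (each of which has a branching vertex added), the same bound $|V(I_i.g)|\le|V(I_0.g)|-(i-h-1)$, and the same geometric-sum estimate yielding the factor $2$ from $k\ge 1$. No substantive differences to report.
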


\begin{proof}
If $I$ is a leaf node, the lemma is trivial. For a non-leaf node
$I_0$, let's consider the path $(I_0,I_1,\ldots,I_h)$ that starts from
$I_0$, always visits the right child in the search tree $\gT'$, and
stops at a leaf node $I_h$; see \cfig\ref{fig:search_tree_right}
for an example. Then, it holds that
\[
	\forall i \in [h+1, 2h],
	|V(g_i)| \leq |V(g_{i-h-1})| \leq |V(g_0)| + h+1-i
\]
and $|V(g_h)| \leq |V(g_0)| - h \leq \tau$. Moreover, for each $i \in
[h+1,2h]$, $I_i$ has at least
one branching vertex (\eg, $b_{i-h-1}$) being added, and thus according
to \clem\ref{lemma:leafs_of_nonleaf_nodes2}, it satisfies $\ell_{\gT'}(I_i) \leq
(k+1)^{|V(g_i)|-\tau} \leq (k+1)^{|V(g_0)|+h+1-i-\tau}$. Consequently,
\begin{align*}
	\ell_{\gT'}(I_0) & = \ell_{\gT'}(I_{h+1}) + \ell_{\gT'}(I_{h+2}) +
\cdots + \ell_{\gT'}(I_{2h}) +
\ell_{\gT'}(I_h) \nonumber \\
	& \leq (k+1)^{|V(g_0)|-\tau} + (k+1)^{|V(g_0)|-1-\tau} + \cdots +
	(k+1)^1 + 1 \\
	& \textstyle \leq \frac{(k+1)^{|V(g_0)|+1-\tau}-1}{k+1-1} < 2\cdot
	(k+1)^{|V(g_0)|-\tau}
\end{align*}
The last inequality follows from the fact that $k \geq 1$.
\end{proof}

Then, the following time complexity 
can be proved in a similar way to
\cthm\ref{theorem:time_complexity} but using
\clem\ref{lemma:leafs_degen_gap} and
\clem\ref{lemma:bound_instance_size}.

\begin{lemma}
Testing whether $G$ has a $k$-defective clique of size $\tau$ for $\tau
\geq k+2$ takes $\bigo(n\times (\alpha\Delta)^{k+2} \times
(k+1)^{\alpha+k+1-\omega_k(G)})$ time.
\end{lemma}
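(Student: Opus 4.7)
The plan is to lift the proof of \cthm\ref{theorem:time_complexity} to the truncated search tree $\gT'$, replacing \clem\ref{lemma:leafs} with \clem\ref{lemma:leafs_degen_gap} while keeping every other ingredient intact. Concretely, I would run the modified algorithm's Stage-I loop over the degeneracy ordering exactly as in \calg\ref{alg:kdct}, but with the stopping rule that $\bbsearch$ backtracks immediately after Line~10 whenever $|V(g')|\leq \tau$. Because the branching rule \textbf{BR} and the reduction rules \textbf{RR1}/\textbf{RR2} are unchanged, the shape of every internal node of $\gT'$ matches that of $\gT$; only the leaves differ (they are now ``truncation'' leaves rather than ``$g'$ is a $k$-defective clique'' leaves). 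This is the only structural change, and it will let me reuse the earlier counting lemmas almost verbatim.

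The argument then proceeds in three steps. First, fix a root $I_0=(g_{v_i},\{v_i\})$ and reuse the shaded-subtree construction of \csec\ref{sec:reduce_exponent}: form the subtree of $\gT'$ by DFS from $I_0$ and backtrack at any node whose path from $I_0$ has either $t$ total edges or $k$ positive edges, where $t\leq \alpha(\Delta-1)$ bounds the non-neighbors of $v_i$ in $g_{v_i}$. The counting argument of \csec\ref{sec:reduce_exponent} still yields $|\gL|=\bigo(t^{k})=\bigo((\alpha\Delta)^{k})$, and \clem\ref{lemma:bound_instance_size} still gives $|V(I.g)|\leq \alpha+k+1$ for every $I\in\gL$, since both arguments invoke only \textbf{BR} and \textbf{RR1} and are insensitive to how the search tree is truncated at its leaves. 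Second, for each $I\in\gL$, note that $I$ has at least one branching vertex already added to $I.S$ (in particular $v_i$ itself), so \clem\ref{lemma:leafs_degen_gap} applies to the subtree of $\gT'$ rooted at $I$ and gives $\ell_{\gT'}(I) < 2\cdot (k+1)^{|V(I.g)|-\tau}\leq 2\cdot (k+1)^{\alpha+k+1-\tau}$.

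Third, I would combine: $\ell_{\gT'}(I_0)\leq \sum_{I\in\gL}\ell_{\gT'}(I) = \bigo\bigl((\alpha\Delta)^{k}\,(k+1)^{\alpha+k+1-\tau}\bigr)$ leaves per root, and the per-node work is $\bigo(|E(I.g)|)=\bigo((\alpha\Delta)^{2})$ exactly as in \cthm\ref{theorem:time_complexity}, producing $\bigo((\alpha\Delta)^{k+2}(k+1)^{\alpha+k+1-\tau})$ time per root and, after summing over the $n$ degeneracy-ordering roots, the claimed $\bigo(n\cdot (\alpha\Delta)^{k+2}(k+1)^{\alpha+k+1-\omega_k(G)})$ time by specializing the bound at $\tau$ no larger than the parameter $\omega_k(G)$ that governs the hardest instance of the test (the hypothesis $\tau\geq k+2$ is exactly what is needed for the diameter-two pruning of Stage-I, and therefore for \clem\ref{lemma:bound_instance_size}, to be applicable). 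The main obstacle is verifying that the shaded-subtree edge/positive-edge counting (and thus the conclusion $|V(I.g)|\leq \alpha+k+1$) still carries over to $\gT'$; this should go through because truncation deletes whole subtrees but never alters an internal branching decision, so every ``Fact'' in the proofs of \csec\ref{sec:reduce_exponent} is preserved. The delicate point to state carefully is that the precondition of \clem\ref{lemma:leafs_degen_gap} (``at least one branching vertex has been added'') is inherited by every $I\in\gL$ from the initial vertex $v_i\in S_0$, since otherwise one would have to pay an extra factor of two per level of the tree.
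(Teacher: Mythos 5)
Your proposal is correct and follows essentially the same route the paper intends: the paper gives no detailed proof, remarking only that the lemma ``can be proved in a similar way to \cthm\ref{theorem:time_complexity} but using \clem\ref{lemma:leafs_degen_gap} and \clem\ref{lemma:bound_instance_size},'' which is exactly the combination you carry out (shaded-subtree decomposition with $|\gL|=\bigo((\alpha\Delta)^k)$, the size bound $|V(I.g)|\leq\alpha+k+1$ on $\gL$, and the truncated-tree leaf count). One small correction: the precondition ``at least one branching vertex has been added'' belongs to \clem\ref{lemma:leafs_of_nonleaf_nodes2}, not to \clem\ref{lemma:leafs_degen_gap}, which holds for \emph{any} node of $\gT'$ at the cost of the factor $2$ --- so the inheritance argument you flag as delicate (and which is in fact shaky, since the root's vertex $v_i$ is placed in $S$ at initialization rather than added by \textbf{BR} after \textbf{RR2}) is not actually needed.
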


Finally, we can find the maximum $k$-defective clique by iteratively
testing whether $G$ has a $k$-defective clique of size $\tau$ for $\tau
= \{\alpha+k+1,\alpha+k,\ldots\}$. This will find the maximum $k$-defective
clique and terminate after testing $\tau=\omega_k(G)$. 
Consequently, the following time complexity follows.

\begin{theorem}
The maximum $k$-defective clique in $G$ can be found in
$\bigo( (\alpha+k+2-\omega_k(G))\times n \times (\alpha
\Delta)^{k+2}\times (k+1)^{\alpha+k+1-\omega_k(G)})$ time when
$\omega_k(G) \geq k+2$.
\end{theorem}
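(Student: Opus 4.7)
The plan is to reduce the problem of finding the maximum $k$-defective clique to $O(\alpha+k+2-\omega_k(G))$ applications of the decision procedure analyzed in the preceding lemma, and then multiply. The modified \kdct is the one that truncates the search tree $\gT'$ by terminating \bbsearch whenever $|V(g')| \le \tau$ after Line~10 of \calg\ref{alg:kdct}, which is exactly the tree whose leaves are counted in \clem\ref{lemma:leafs_degen_gap}. I would iterate $\tau = \alpha+k+1, \alpha+k, \alpha+k-1, \ldots$, invoke the modified \kdct on each value of $\tau$, and return the first $k$-defective clique of size $\tau$ that is found.

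The first step is to argue that the iteration is well-initialized, i.e., that $\omega_k(G) \le \alpha+k+1$; this is the degeneracy-based upper bound, which follows because in any $k$-defective clique $C$, the last vertex of $C$ in the degeneracy ordering has at least $|C|-1-k$ neighbors in $C$, and so $|C|-1-k \le \alpha$. Given this, for every $\tau$ in the iteration range we have $\tau \le \alpha+k+1$, and every tested value satisfies $\tau \geq \omega_k(G) \geq k+2$, so \clem{\ref{lemma:bound_instance_size}} and the preceding lemma both apply. The iteration stops as soon as the test succeeds, which happens exactly at $\tau=\omega_k(G)$, yielding a maximum $k$-defective clique.

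The second step is to count iterations and multiply by per-iteration cost. The number of values of $\tau$ tested is at most $(\alpha+k+1)-\omega_k(G)+1 = \alpha+k+2-\omega_k(G)$. Each individual test takes $\bigo(n\times (\alpha\Delta)^{k+2}\times (k+1)^{\alpha+k+1-\omega_k(G)})$ time by the preceding lemma (note the bound uses the worst value $(k+1)^{\alpha+k+1-\omega_k(G)}$, which upper bounds $(k+1)^{\alpha+k+1-\tau}$ for every $\tau \geq \omega_k(G)$ encountered). Multiplying these two factors gives exactly the claimed time bound.

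I expect the main (minor) obstacle to be the bookkeeping for the ``slight modification'' and the correctness check: one must verify that the truncation at $|V(g')| \le \tau$ preserves every $k$-defective clique of size $\tau$ (it does, because such a clique is contained in some $g'$ with $|V(g')|\ge \tau$ before the truncation triggers), and that the outer loop over $\tau$ does not inflate the polynomial factors beyond what is absorbed by the $\alpha+k+2-\omega_k(G)$ factor already exposed in the statement. Once these points are made explicit, the time bound is an immediate product of the iteration count and the per-iteration bound.
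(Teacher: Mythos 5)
Your proposal is correct and follows essentially the same route as the paper: iterate the decision procedure for $\tau = \alpha+k+1, \alpha+k, \ldots$ down to $\omega_k(G)$, bound the number of iterations by $\alpha+k+2-\omega_k(G)$, and multiply by the per-test cost from the preceding lemma. The extra justifications you supply (the degeneracy-based upper bound $\omega_k(G) \le \alpha+k+1$ and the safety of the truncation at $|V(g')| \le \tau$) are points the paper leaves implicit, and they are argued correctly.
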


\section{A New Reduction Rule}
\label{sec:rr}

In this section, we propose a new reduction rule based on the
degree-sequence-based upper bound {\bf UB} that is proposed
in~\cite{AAAI22:Gao}, to further improve the practical performance of
\kdct.
\begin{description}
	\item[UB~\cite{AAAI22:Gao}.] Given an instance $(g,S)$, let $v_1,v_2,\ldots$ be an
		ordering of $V(g)\setminus S$ in non-decreasing order regarding
		their numbers of non-neighbors in $S$, \ie
		$|\overline{N}_S(\cdot)|$. The maximum $k$-defective
		clique in the instance $(g,S)$ is of size at most $|S|$ plus the
		largest $i$ such that $\sum_{j=1}^i |\overline{N}_S(v_j)| \leq k -
		|\overline{E}(S)|$.
\end{description}
Note that, different tie-breaking techniques for ordering the vertices
lead to the same upper bound. Thus, an arbitrary tie-breaking technique
can be used in {\bf UB}.

Let $\lb$ be the size of the currently found best solution. If an upper
bound computed by {\bf UB} for an instance $(g,S)$ is no larger than
$\lb$, then we can prune the instance. However, this way of first
generating an instance and then try to prune it based on a computed
upper bound is inefficient.
To improve efficiency, we propose to 
remove $u \in V(g)\setminus S$ from $g$ if an upper bound of
$(g, S\cup u)$ is no larger than $\lb$.
Note that, rather than computing the exact upper bound for $(g,S\cup
u)$, we only need to test whether the upper bound is larger than $\lb$
or not. The latter can be conducted more efficiently and without generating
$(g,S\cup u)$; moreover, computation can be shared
between the testing for different vertices of $V(g)\setminus S$.

Let $v_1,v_2,\ldots$ be an ordering of $V(g)\setminus (S\cup u)$ in
non-decreasing order regarding $|\overline{N}_S(\cdot)|$, and $C$ be the
set of vertices that have the same number of non-neighbors in $S$ as
$v_{\lb-|S|}$, \ie, $C = \{v_i \in V(g)\setminus (S\cup u) \mid
|\overline{N}_{S}(v_i)| = |\overline{N}_S(v_{\lb-|S|})|\}$. Let $C_1$
and $C_2$ be a partitioning of $C$ according to their positions in the
ordering, \ie, $C_1 = C \cap \{v_1,\ldots,v_{\lb-|S|}\}$ and $C_2 =
C\setminus C_1$.
Note that, both $C_1$ and $C_2$ contain consecutive vertices in the
ordering, and $C_2$ could be empty.
A visualization of the ordering and vertex sets is
shown below, where $S\cup u$ and $\lb-|S|$ are denoted by $R$ and $r$
for brevity.
\begin{equation*}
	\underbrace{S,u}_{\text{$R$}},
	\underbrace{v_1,\ldots,v_{r-|C_1|}}_{\text{$D$}}, \overbrace{
		\underbrace{v_{r-|C_1|+1},\ldots,v_{r}}_{\text{$C_1$}},
	\underbrace{v_{r+1},\ldots
v_{r+|C_2|}}_{\text{$C_2$}}}^\text{$C = \{v_i \in V(g)\setminus R
\mid |\overline{N}_S(v_i)| = |\overline{N}_S(v_{r})|\}$}, \ldots
\end{equation*}
We prove in the lemma below that the upper bound computed by {\bf UB}
for the instance $(g,S\cup u)$ is at most $\lb$ if and only if
\begin{equation}
	\label{eq:rr}
	|\overline{E}(S)| + \sum_{j = 1}^{r}
	|\overline{N}_S(v_j)| + |\overline{N}_{S\cup D}(u)| + \max\{
	|\overline{N}_{C_1}(u)|-|N_{C_2}(u)|, 0\} > k
\end{equation}

\begin{lemma}
\label{lemma:RR3}
The upper bound computed by {\bf UB} for the instance $(g,S\cup u)$ is
at most $\lb$ if and only if \cequ{\ref{eq:rr}} is satisfied.
\end{lemma}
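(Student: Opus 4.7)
The plan is to unfold the definition of \textbf{UB} applied to $(g, S\cup u)$ and rearrange the resulting inequality into \cequ{\ref{eq:rr}}. Write $R = S\cup u$ and $r = \lb-|S|$, and let $v'_1,v'_2,\ldots$ be an ordering of $V(g)\setminus R$ in non-decreasing order of $|\overline{N}_{R}(\cdot)|$; this is the ordering that \textbf{UB} actually uses on input $(g,R)$, in contrast with the $|\overline{N}_S(\cdot)|$-ordering $v_1,v_2,\ldots$ fixed in the lemma statement. By definition, the upper bound for $(g,R)$ equals $|R|$ plus the largest $i$ with $\sum_{j=1}^i |\overline{N}_R(v'_j)| \le k-|\overline{E}(R)|$, so $\textbf{UB}\le \lb$ iff that largest $i$ is at most $r-1$, iff
\[
|\overline{E}(R)| + \sum_{j=1}^{r} |\overline{N}_{R}(v'_j)| \;>\; k.
\]
Using $|\overline{E}(R)| = |\overline{E}(S)|+|\overline{N}_S(u)|$, the disjointness identity $|\overline{N}_S(u)|+|\overline{N}_D(u)| = |\overline{N}_{S\cup D}(u)|$, and $|\overline{N}_R(v)| = |\overline{N}_S(v)| + \chi(v)$ with $\chi(v)=1$ iff $v\in\overline{N}(u)$, the task reduces to proving
\[
\sum_{j=1}^r |\overline{N}_R(v'_j)| \;=\; \sum_{j=1}^r |\overline{N}_S(v_j)| + |\overline{N}_D(u)| + \max\{|\overline{N}_{C_1}(u)|-|N_{C_2}(u)|,0\};
\]
adding $|\overline{E}(R)|$ to both sides then yields \cequ{\ref{eq:rr}}.

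Next I would locate the top-$r$ set $\{v'_1,\ldots,v'_r\}$ in terms of the original $|\overline{N}_S|$-ordering. Every $v\in D$ satisfies $|\overline{N}_R(v)| \le |\overline{N}_S(v_r)|$, which is no larger than $|\overline{N}_R(v'')|$ for any $v''\notin D\cup C$; hence $D$ can be placed entirely inside the top-$r$ set. The remaining $|C_1|$ slots must minimize $|\overline{N}_R(\cdot)|$. Inside $C$ the value $|\overline{N}_R(\cdot)|$ equals $|\overline{N}_S(v_r)|$ for the $|N_C(u)|$ neighbors of $u$ and $|\overline{N}_S(v_r)|+1$ for the $|\overline{N}_C(u)|$ non-neighbors, while vertices beyond $C$ all contribute at least $|\overline{N}_S(v_r)|+1$, so neighbors of $u$ inside $C$ are strictly preferred. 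The proof therefore splits on whether $|N_C(u)|\ge |C_1|$, equivalently $|N_{C_2}(u)|\ge |\overline{N}_{C_1}(u)|$.

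In the case $|N_C(u)|\ge|C_1|$, the max in \cequ{\ref{eq:rr}} is zero and we fill the $|C_1|$ remaining slots with neighbors of $u$ drawn from $C$; direct substitution evaluates $\sum_{j=1}^r |\overline{N}_R(v'_j)|$ to $\sum_{j=1}^r |\overline{N}_S(v_j)| + |\overline{N}_D(u)|$, matching the right-hand side of the identity. In the complementary case $|N_C(u)|<|C_1|$, we take all $|N_C(u)|$ neighbors of $u$ inside $C$ together with $|C_1|-|N_C(u)| = |\overline{N}_{C_1}(u)|-|N_{C_2}(u)|>0$ further vertices of value $|\overline{N}_S(v_r)|+1$ (available among the non-neighbors of $u$ in $C$, or beyond $C$); the extra contribution equals exactly the nonzero max term, and the identity again matches.

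The main obstacle is the tie-breaking: the $|\overline{N}_R(\cdot)|$-ordering is not unique and neither is $\{v'_1,\ldots,v'_r\}$, yet the \textbf{UB} value must not depend on the choice. I would handle this by observing that all tied vertices contribute the same $|\overline{N}_R|$ value, so the sum $\sum_{j=1}^r |\overline{N}_R(v'_j)|$ is an invariant of the tie-break; we may then fix a convenient choice (neighbors of $u$ before non-neighbors inside $C$) to make the bookkeeping transparent. Once the sum is evaluated in both cases, \cequ{\ref{eq:rr}} follows immediately.
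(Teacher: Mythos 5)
Your proof is correct and takes essentially the same route as the paper's: both unfold \textbf{UB} on $(g,S\cup u)$, argue that the top-$r$ prefix of the $|\overline{N}_{S\cup u}(\cdot)|$-ordering may be taken to be $D$ followed by the cheapest $|C_1|$ vertices of $C_1\cup C_2$, and split on whether $|\overline{N}_{C_1}(u)|\leq |N_{C_2}(u)|$ to recover the $\max$ term in \cequ{\ref{eq:rr}}. Your explicit handling of tie-breaking and of the availability of enough value-$(|\overline{N}_S(v_r)|+1)$ vertices only makes the same argument slightly more airtight.
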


\begin{proof}
Let's firstly simply the left-hand side (LHS) of \cequ{\ref{eq:rr}}. Note
that, $|\overline{N}_{S\cup D}(u)| = |\overline{N}_S(u)| +
|\overline{N}_D(u)|$, $|\overline{E}(S)| + |\overline{N}_S(u)| =
|\overline{E}(R)|$, and $\sum_{j=1}^{r-|C_1|} |\overline{N}_S(v_j)| +
|\overline{N}_D(u)| = \sum_{j=1}^{r-|C_1|} |\overline{N}_R(v_j)|$. 
Let $a = |\overline{N}_{C_1}(u)|$ and $b = |N_{C_2}(u)|$.
Then, \cequ{\ref{eq:rr}} is equivalent to
\begin{equation}
	\label{eq:eq1}
	|\overline{E}(R)| + \sum_{j=1}^{r-|C_1|} |\overline{N}_R(v_j)| +
	\sum_{v_j \in C_1} |\overline{N}_S(v_j)| + \max\{a-b, 0\} > k
\end{equation}
Let $w_1,w_2,\ldots$ be an ordering of $V(g)\setminus R$ in
non-decreasing order regarding $|\overline{N}_R(\cdot)|$. Then, the
upper bound computed by {\bf UB} for the instance $(g,R)$
is at most $\lb$ if and only if 
\begin{equation}
	\label{eq:eq2}
	|\overline{E}(R)| + \sum_{j=1}^{r} |\overline{N}_R(w_j)| > k
\end{equation}
For every $v \in V(g)\setminus R$, if $(u,v) \in E$, then
$|\overline{N}_R(v)| = |\overline{N}_S(v)|$; otherwise,
$|\overline{N}_R(v)| = |\overline{N}_S(v)|+1$.
According to the definition of $C$, for every $v \in D$, it holds that
$|\overline{N}_S(v)| < |\overline{N}_S(v_{r-|C_1|+1})|$ and thus
$|\overline{N}_R(v)| \leq |\overline{N}_S(v_{r-|C_1|+1})| \leq
|\overline{N}_R(w_{r-|C_1|+1})|$.
Consequently, we can assume $\{w_1,\ldots,w_{r-|C_1|}\} = D$; note
however that, the ordering may be different.
Similarly, we can assume $\{w_{r-|C_1|+1},\ldots,w_{r}\} \subseteq
C_1\cup C_2$. Then, we have
\begin{align*}
	& \text{LHS of \cequ{\ref{eq:eq1}}} - \text{LHS of \cequ{\ref{eq:eq2}}} \\
	= & \sum_{j = r-|C_1|+1}^r |\overline{N}_S(v_j)| + \max\{a-b,
	0\} - \sum_{j = r-|C_1|+1}^r |\overline{N}_R(w_j)| \\
	= & |C_1|\cdot |\overline{N}_S(v_{r})| + \max \{a-b, 0\} -
	\sum_{j= r-|C_1|+1}^r |\overline{N}_R(w_j)|
\end{align*}
There are exactly $|C_1|-a+b$ vertices of $C_1\cup C_2$
satisfying $|\overline{N}_R(\cdot)| = |\overline{N}_S(v_{r})|$. 
Thus, if $a \leq b$, then there are $|C_1|-a+b \geq |C_1|$ vertices of
$C_1\cup C_2$ with $|\overline{N}_R(\cdot)| = |\overline{N}_S(v_r)|$ and
$\sum_{j = r-|C_1|+1}^r |\overline{N}_R(w_j)| = |C_1|\cdot
|\overline{N}_S(v_{r})|$; otherwise, $\sum_{j =r-|C_1|+1}^r
|\overline{N}_R(w_j)| = |C_1|\cdot |\overline{N}_S(v_{r})| + a-b$.
Therefore, the lemma holds.
\end{proof}

Based on the above discussions, we propose the following
degree-sequence-based reduction rule {\bf RR3}.
\begin{description}
	\item[RR3 (degree-sequence-based reduction rule).] Given an instance
		$(g,S)$ with $|S| < \lb < |V(g)|$ and a vertex $u \in V(g)\setminus
		S$, we remove $u$ from $g$ if \cequ{\ref{eq:rr}} is satisfied.
\end{description}

\begin{algorithm}[htb]
\small
\caption{$\kw{ApplyRR3}(g,S,\lb)$}
\label{alg:rr3}
\State{Obtain $|\overline{E}(S)|$ and $|\overline{N}_S(v)|$ for each $v
\in V(g)\setminus S$}
\State{Let $u_1,u_2,\ldots$ be an ordering of $V(g)\setminus S$ in
non-decreasing order regarding $|\overline{N}_S(\cdot)|$}
\State{$X \la \emptyset$}
\For{$i \la 1 \textbf{ to } |V(g)\setminus S|$}{
	\lIf{$|X| + |V(g)\setminus S| - i < \lb-|S|$}{\textbf{return} $(g[S],S)$}
	\State{Let $v_1,v_2,\ldots$ be the vertices of $X\cup
	\{u_{i+1},u_{i+2},\ldots\}$ in non-decreasing order regarding
$|\overline{N}_S(\cdot)|$}
	\State{Obtain $|\overline{N}_D(u_i)|$, $|\overline{N}_{C_1}(u_i)|$ and
	$|N_{C_2}(u_i)|$}
	\lIf{\cequ{\ref{eq:rr}} is not satisfied}{Append $u_i$ to the end of
	$X$}
}
\Return{$(g[X], S)$}
\end{algorithm}

Given an instance $(g,S)$, our pseudocode of efficiently applying {\bf
RR3} for all vertices of $V(g)\setminus S$ is shown in
\calg\ref{alg:rr3}, which returns a reduced instance at either Line~5 or
Line~9. We first obtain $|\overline{E}(S)|$ and $|\overline{N}_S(v)|$
for each $v \in V(g)\setminus S$ (Line~1), and then order vertices of
$V(g)\setminus S$ in non-decreasing order regarding
$|\overline{N}_S(\cdot)|$ (Line~2). Let $u_1,u_2,\ldots$ be the ordered
vertices. We then process the vertices of $V(g)\setminus S$ one-by-one
according to the sorted order (Line~4). When processing $u_i$, the
vertices that we need to consider are the vertices that have not been
processed yet (\ie, $\{u_{i+1},u_{i+2},\ldots\}$) and the subset of
$\{u_1,u_2,\ldots,u_{i-1}\}$ that passed (\ie, not pruned) by the
reduction rule; denote the latter subset by $X$. This means that
$\{u_1,\ldots,u_{i-1}\}\setminus X$ have already been removed from $g$ by the
reduction rule. If the number of remaining vertices (\ie, $|X| +
|V(g)\setminus S| - i$) is less than $r=\lb-|S|$, then we remove all
vertices of $V(g)\setminus S$ from $g$ by returning $(g[S],S)$ (Line~5).
Otherwise, let $v_1,v_2,\ldots,v_r,\ldots$ be the vertices of $X\cup
\{u_{i+1},u_{i+2},\ldots\}$ in non-decreasing order regarding
$|\overline{N}_S(\cdot)|$ (Line~6). We obtain $|\overline{N}_D(u_i)|$,
$|\overline{N}_{C_1}(u_i)|$ and $|N_{C_2}(u_i)|$ (Line~7), and 
remove $u_i$ from $g$ if \cequ{\ref{eq:rr}} is satisfied; otherwise, $u_i$ is not pruned and is
appended to the end of $X$ (Line~8).
Finally, Line~9 returns the reduced instance $(g[X],S)$.

\begin{lemma}
	\calg\ref{alg:rr3} runs in $\bigo(|E(g)|+k)$ time. 
\end{lemma}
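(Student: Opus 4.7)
The plan is to analyze \calg\ref{alg:rr3} by component. Line 1 takes $\bigo(|E(g)|)$ time by scanning the adjacency lists incident to $S$ to accumulate $|\overline{E}(S)|$ and $|\overline{N}_S(v)|$ for every $v\in V(g)\setminus S$. Line 2 runs in $\bigo(|V(g)|+k)$ using bucket sort on $\{0,1,\ldots,k{+}1\}$, with any value above $k$ capped (safe because such a vertex never contributes to a prefix sum bounded by $k$).

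The bulk of the argument is the loop (Lines 4--8). The plan is to maintain across iterations a compact set of summaries of the dynamic list $A_i=X\cup\{u_{i+1},u_{i+2},\ldots\}$: a bucket-count array $\cnt[j]$ of active vertices with $|\overline{N}_S(\cdot)|=j$; the critical-bucket pointer $c_i$ (the smallest $c$ with $\sum_{j\leq c}\cnt[j]\geq r$, where $r=\lb-|S|$); the prefix sum $\sigma=\sum_{j<c_i}\cnt[j]$, giving $|C_1|=r-\sigma$; the weighted prefix $T=\sum_{j<c_i}j\cdot \cnt[j]$, so that the first summation in \cequ{\ref{eq:rr}} equals $T+c_i\cdot|C_1|$; and an active-flag bit array. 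Between consecutive steps $A$ changes by at most two vertices ($-u_{i+1}$ always, plus $+u_i$ when $u_i$ survives), so $\cnt$, $\sigma$, $T$ admit $\bigo(1)$ per-step updates; $c_i$ is amortized with the potential $\Phi=c_i$, paying for downward drift and charging upward drift to the removals, yielding amortized $\bigo(1)$ per step plus $\bigo(k)$ total overhead.

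Given these summaries, for each $u_i$ I would scan its adjacency list in $\bigo(d_g(u_i))$ time, classifying every active neighbor by whether its bucket is below, equal to, or above $c_i$, and letting $n_D(u_i)$ and $n_c(u_i)$ denote the counts for the first two categories. Then $|\overline{N}_D(u_i)|=|D|-n_D(u_i)$ with $|D|=\sigma$, and a short parameterization by how many non-neighbors of $u_i$ are placed in $C_1$ shows $|\overline{N}_{C_1}(u_i)|-|N_{C_2}(u_i)|=|C_1|-n_c(u_i)$ under every admissible tie-breaking, so \cequ{\ref{eq:rr}} is evaluable in $\bigo(1)$ extra work beyond the scan. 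Summing, the loop costs $\sum_i d_g(u_i)+\bigo(|V(g)|+k)=\bigo(|E(g)|+k)$. The main obstacle is the pair consisting of the tie-breaking-invariant identity for the $\max$ term and the amortization of the $c_i$ pointer; once both are in place, the rest is standard edge-charging.
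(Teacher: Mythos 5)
Your bound and overall skeleton match the paper's: both proofs charge Line~7 to an adjacency-list scan of $u_i$ using complement counting ($|\overline{N}_D(u_i)|=|D|-|N_D(u_i)|$, etc.), make Line~8 $\bigo(1)$ via precomputed sums, and sum $\sum_i d_g(u_i)=\bigo(|E(g)|)$ plus $\bigo(|V(g)|+k)$ for the sort. Where you genuinely diverge is in the bookkeeping for locating $D$, $C_1$, $C_2$ and the prefix sum $\sum_{j=1}^{r}|\overline{N}_S(v_j)|$. The paper exploits a structural fact that makes all of this nearly free: since $X$ receives vertices in processed order and the unprocessed vertices form a suffix of the globally sorted list, every element of $X$ has $|\overline{N}_S(\cdot)|$ at most that of every element of $\{u_{i+1},\ldots\}$, so a valid ordering $v_1,v_2,\ldots$ is simply the concatenation of the two already-sorted arrays; $v_r$ and the segment boundaries of $D,C_1,C_2$ are then read off directly, and the prefix sum comes from stored suffix sums. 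You instead maintain bucket counts, a critical-bucket pointer $c_i$, and running sums $\sigma$ and $T$ under the two-element-per-step updates. Your identity $|\overline{N}_{C_1}(u)|-|N_{C_2}(u)|=|C_1|-|N_{C}(u)|$ is a nice addition the paper does not state explicitly: it makes the tie-breaking invariance of \cequ{\ref{eq:rr}} transparent and spares you from separating $C_1$ from $C_2$ during the neighbor scan.

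One soft spot: the potential $\Phi=c_i$ with ``charging upward drift to the removals'' does not by itself give $\bigo(1)$ amortized pointer movement, because a single removal can force the pointer to skip arbitrarily many empty buckets. The total-movement bound you want is nevertheless true, but for a structural reason you should state: the bucket values of the vertex removed at step $i$ (namely $u_{i+1}$) and of the vertex possibly added (namely $u_i$) are both non-decreasing in $i$ with $|\overline{N}_S(u_i)|\leq|\overline{N}_S(u_{i+1})|$, so for each threshold $\beta$ the count of active vertices with value at most $\beta$ is non-increasing until the last step that touches buckets $\leq\beta$ and changes by at most one thereafter; hence the predicate ``$c_i>\beta$'' flips $\bigo(1)$ times per $\beta$ and total pointer movement is $\bigo(k)$. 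With that repair your argument is complete; alternatively, adopting the paper's concatenation observation removes the need for the pointer machinery entirely.
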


\begin{proof}
Firstly,
Lines~1--2 run in $\bigo(|E(g)|+k)$ time, where the sorting is
conducted by counting sort. Secondly, Line~6 does not do anything; it
is just a syntax sugar for relabeling the vertices.
Thirdly, Line~7 can be conducted in $\bigo(|N_g(u_i)|)$ time since
$|\overline{N}_D(u_i)| = |D|-|N_D(u_i)|$ and $|\overline{N}_{C_1}(u_i)|
= |C_1| - |N_{C_1}(u_i)|$; note that, as each of $D, C_1, C_2$ spans at
most two arrays (\ie, $X$ and $\{u_{i+1},\ldots\}$), we can easily get
its size and boundary.
Lastly, Line~8 can be checked in constant time by noting that
$\sum_{j=1}^r |\overline{N}_S(v_j)|$ can be obtained in constant time
after storing the suffix sums of $|\overline{N}_S(u_{1})|,
|\overline{N}_S(u_{2})|, \ldots$, $|\overline{N}_S(u_{i+1})|, |\overline{N}_S(u_{i+2})|, \ldots$. 
\end{proof}

\begin{figure}[htb]
\centering
\begin{tikzpicture}[scale=0.5]
\node[draw, circle, inner sep=1.5pt] (1) at (0,0) {$s_1$};
\node[draw, circle, inner sep=1.5pt] (2) at (0,-1.5) {$s_2$};
\node[draw, circle, inner sep=1.5pt] (3) at (0,-3) {$s_3$};
\node[draw, circle, inner sep=1.5pt] (4) at (4,1.5) {$u_1$};
\node[draw, circle, inner sep=1.5pt] (5) at (7,1) {$u_2$};
\node[draw, circle, inner sep=1.5pt] (6) at (9,0) {$u_3$};
\node[draw, circle, inner sep=1.5pt] (7) at (9,-1.5) {$u_4$};
\node[draw, circle, inner sep=1.5pt] (8) at (9,-3) {$u_5$};
\path[draw,thick] (2) -- (3);
\path[draw,thick] (1) -- (4) -- (2) -- (5) -- (3) -- (4) -- (5) -- (1);
\path[draw,thick] (2) -- (6) -- (3);
\path[draw,thick] (1) -- (7) -- (3);
\path[draw,thick] (1) -- (8) -- (2);
\path[draw,thick] (4) -- (7);
\end{tikzpicture}
\caption{Example instance for applying reduction rule {\bf RR3}}
\label{fig:rr}
\end{figure}
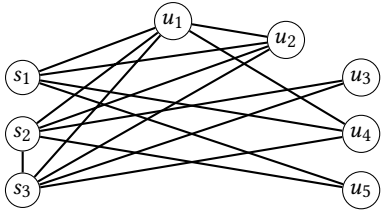

\begin{example}
Consider the instance $(g,S)$ in \cfig\ref{fig:rr} for $k=3$ and
$\lb=5$, where $g$ is the entire graph and $S = \{s_1,s_2,s_3\}$; thus
$r=2$. The values of $|\overline{N}_S(\cdot)|$ for the vertices of
$V(g)\setminus
S = \{u_1,\ldots,u_5\}$ are $\{u_1: 0, u_2: 0, u_3: 1, u_4: 1, u_5:1
\}$. As $|\overline{E}(S)| = 2$, the upper bound of $(g,S)$ computed by
{\bf UB} is $6$; thus, the instance is not pruned. 

Let's apply {\bf RR3}
for $u_1$. As $X = \emptyset$, we have $v_i = u_{i+1}$ for $1
\leq i \leq 4$. Then $D = \{u_2\}, C_1 = \{u_3\}, C_2 = \{u_4,u_5\}$,
$\sum_{j=1}^r |\overline{N}_S(v_j)| = 1$, $|\overline{N}_{S\cup D}(u_1)|
= 0$, $|\overline{N}_{C_1}(u_1)| = 1$ and $|N_{C_2}(u_1)| = 1$.
Thus, \cequ{\ref{eq:rr}} is not satisfied; $u_1$ is not
pruned and is appended to $X$.

Now let's apply {\bf RR3} for $u_2$. As $X = \{u_1\}$, we have $v_1 =
u_1$ and $v_i = u_{i+1}$ for $2 \leq i \leq 4$. Then $D = \{u_1\}, C_1 =
\{u_3\}, C_2 = \{u_4,u_5\}$, $\sum_{j=1}^r |\overline{N}_S(v_j)| = 1$,
$|\overline{N}_{S\cup D}(u_2)| = 0$, $|\overline{N}_{C_1}(u_2)| = 1$ and
$|N_{C_2}(u_2)| = 0$. Consequently, \cequ{\ref{eq:rr}} is satisfied and
$u_2$ is removed from $g$. It can be verified that $u_3,u_4,u_5$ will
all subsequently be removed.
\end{example}

\stitle{Effectiveness of {\bf RR3}.}
\kdc~\cite{SIGMOD24:Chang} also proposed a reduction rule based on {\bf
UB}; let's denote it as {\bf RR3'}. We remark that our {\bf
RR3} is more effective (\ie, prunes more vertices) than {\bf RR3'}
since the latter ignores the non-edges between $u$ and
$V(g)\setminus (S\cup u)$ that are considered by {\bf RR3}; specifically, {\bf
RR3'} removes $u$ from $g$ if $|\overline{E}(S)| + \sum_{j=1}^{r}
|\overline{N}_S(v_j)| + |\overline{N}_S(u)| > k$. 
More generally, the effectiveness of our {\bf RR3} is characterized by the
lemma below.

\begin{lemma}
{\bf RR3} is more effective than any other reduction rule that is
designed based on an upper bound of $(g,S\cup u)$ that ignores
all the non-edges between vertices of $V(g)\setminus (S\cup u)$. 
\end{lemma}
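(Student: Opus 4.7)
The plan is to interpret ``a reduction rule designed based on an upper bound of $(g, S\cup u)$ that ignores all the non-edges between vertices of $V(g)\setminus(S\cup u)$'' as a rule whose pruning decision remains valid even if every such non-edge were replaced by an actual edge. Formally, let $H$ denote the graph obtained from $g$ by adding every missing edge whose endpoints both lie in $V(g)\setminus(S\cup u)$; in $H$, every non-edge has at least one endpoint in $S\cup u$. Any upper bound $\ub'(g,S\cup u)$ that ignores the non-edges within $V(g)\setminus(S\cup u)$ must, by this definition, also be a valid upper bound for the size of the maximum $k$-defective clique of $H$ that contains $S\cup u$.

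Next I would show that the bound $\ub(g, S\cup u)$ used by {\bf RR3} is \emph{tight} on $H$, i.e., it equals the exact size of the largest $k$-defective clique of $H$ containing $S\cup u$. Since every pair of vertices in $V(g)\setminus(S\cup u)$ is adjacent in $H$, extending $S\cup u$ by a subset $T\subseteq V(g)\setminus(S\cup u)$ produces a subgraph with exactly $|\overline{E}(S\cup u)| + \sum_{v\in T}|\overline{N}_{S\cup u}(v)|$ non-edges. Maximising $|T|$ subject to this sum being at most $k$ is a standard exchange argument: sorting $V(g)\setminus(S\cup u)$ in non-decreasing order of $|\overline{N}_{S\cup u}(\cdot)|$ and taking the longest feasible prefix is optimal, which is precisely the quantity defined by {\bf UB} on $(g, S\cup u)$.

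Combining the two observations closes the argument. Suppose a competing rule in the stated class prunes $u$; then it must certify $\ub'(g,S\cup u)\leq \lb$. Since $\ub'$ is a valid upper bound on $H$ and $\ub$ is tight on $H$, we obtain $\ub(g,S\cup u)\leq \ub'(g,S\cup u)\leq \lb$, and \clem\ref{lemma:RR3} (equivalently \cequ{\ref{eq:rr}}) then guarantees that {\bf RR3} also removes $u$. Hence {\bf RR3} dominates every rule in the stated class. As an immediate corollary, {\bf RR3'} (which uses $|\overline{N}_S(u)|$ in place of $|\overline{N}_{S\cup D}(u)| + \max\{|\overline{N}_{C_1}(u)|-|N_{C_2}(u)|,0\}$) is dominated by {\bf RR3}.

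The main obstacle will be pinning down the correct formalisation of ``ignores the non-edges between vertices of $V(g)\setminus (S\cup u)$'' so that the dominance statement is unambiguous; the natural reading, and the one I would adopt, is that the competing rule must return the same bound on $(g, S\cup u)$ and on $(H, S\cup u)$, which makes $H$ a legitimate ``worst-case completion'' for it. Once this is fixed, the only substantive step is the greedy tightness of {\bf UB} on $H$, and the comparison of the two bounds is then immediate.
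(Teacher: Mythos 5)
Your proposal is correct and follows essentially the same route as the paper: the paper's proof likewise combines the equivalence from \clem\ref{lemma:RR3} (applying {\bf RR3} is the same as computing {\bf UB} on $(g,S\cup u)$) with the assertion that {\bf UB} is the tightest bound among those ignoring non-edges within $V(g)\setminus(S\cup u)$. Your construction of the completion graph $H$ and the greedy-prefix tightness argument simply spells out the details of that second assertion, which the paper leaves as a one-line claim.
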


\begin{proof}
Firstly, we have proved in \clem\ref{lemma:RR3} that applying {\bf RR3}
is equivalent to computing {\bf UB} for $(g,S\cup u)$. Secondly, it can
be shown that {\bf UB} computes the tightest upper bound for $(g,S\cup
u)$ among all upper bounds that ignore all the non-edges between
vertices of $V(g)\setminus (S\cup u)$. Thus, the lemma holds.
\end{proof}

In particular, the second-order reduction rule proposed
in~\cite{SIGMOD24:Chang} is designed based on an upper bound of
$(g,S\cup u)$ that does not consider the non-edges between vertices of
$V(g)\setminus (S\cup u)$. Thus, {\bf RR3} is more effective than the
second-order reduction rule of~\cite{SIGMOD24:Chang}.

\begin{figure*}[t]
\centering
\begin{minipage}{.48\textwidth}
\subfigure[$k=1$]{
	\includegraphics[width=.5\columnwidth]{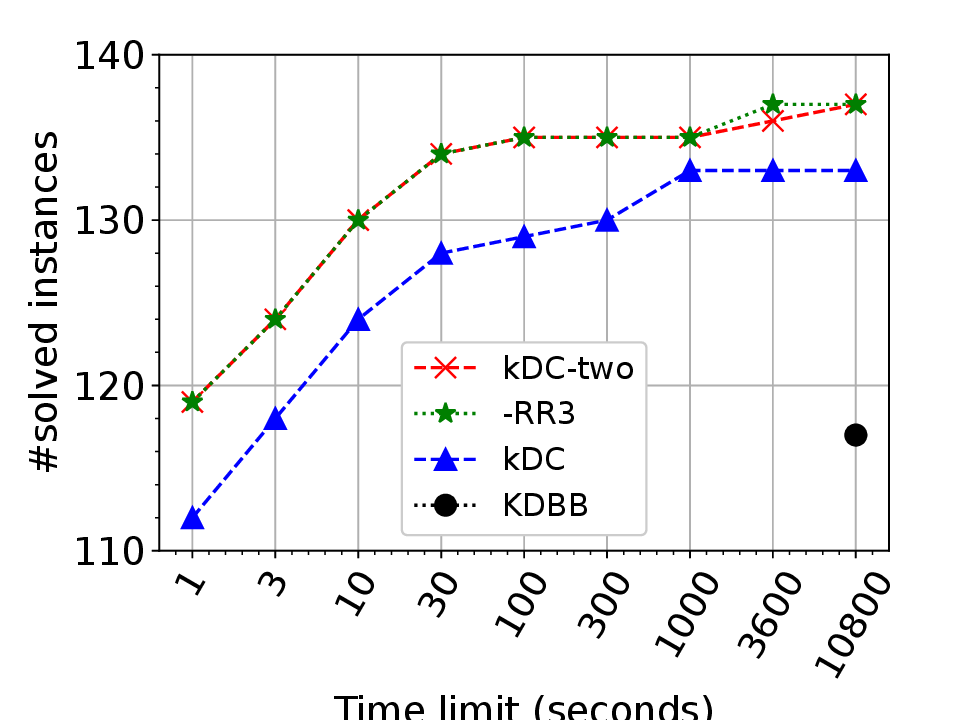}
}%
\subfigure[$k=3$]{
	\includegraphics[width=.5\columnwidth]{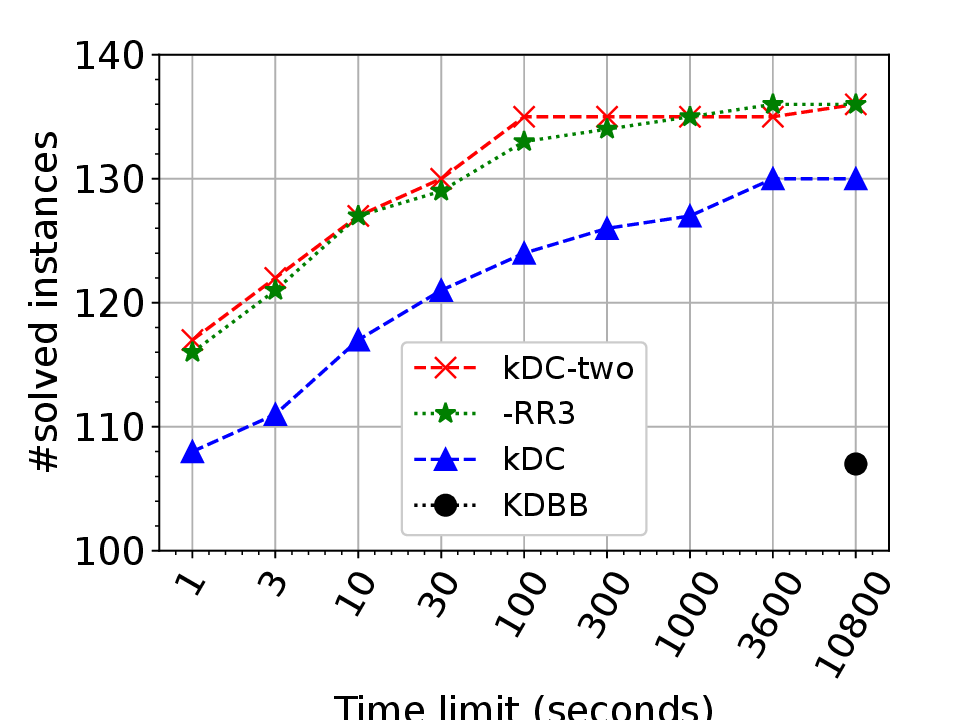}
}
\subfigure[$k=5$]{
	\includegraphics[width=.5\columnwidth]{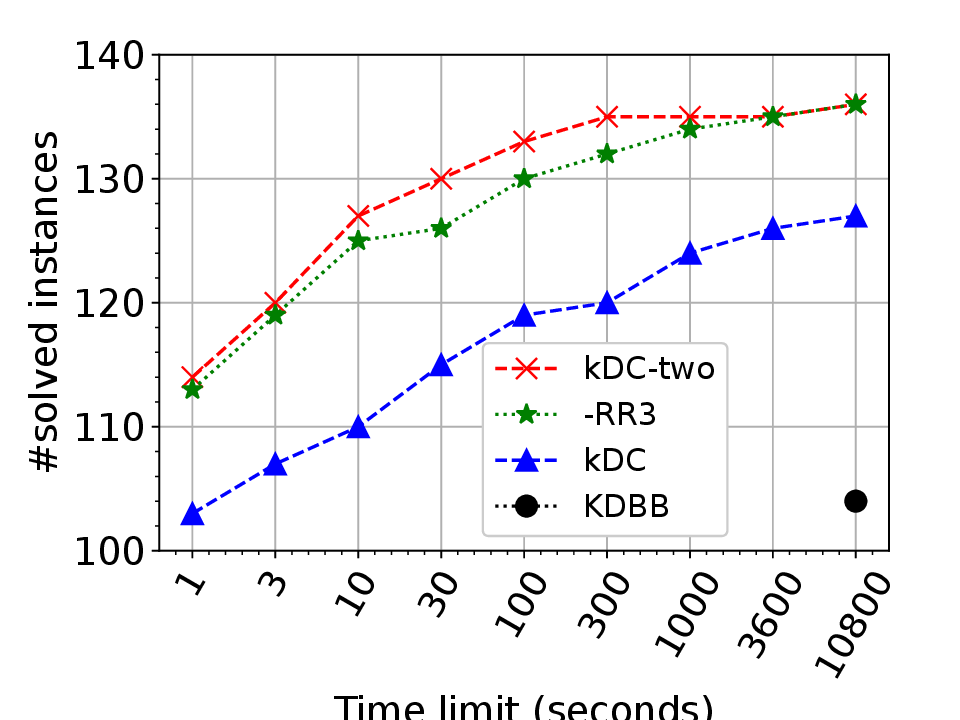}
}%
\subfigure[$k=10$]{
	\includegraphics[width=.5\columnwidth]{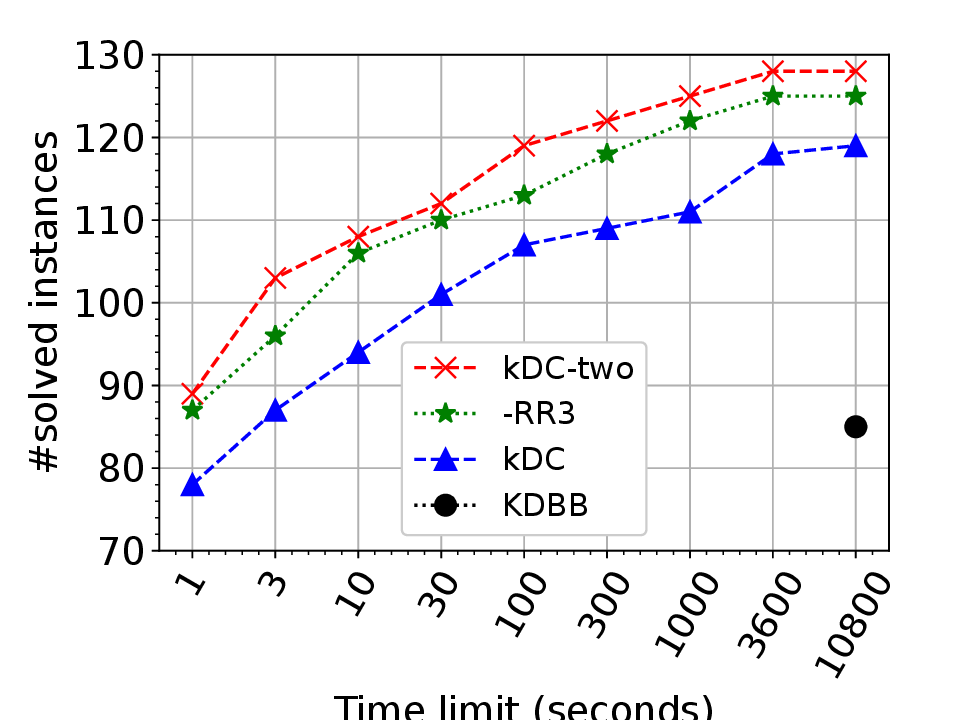}
}
\subfigure[$k=15$]{
	\includegraphics[width=.5\columnwidth]{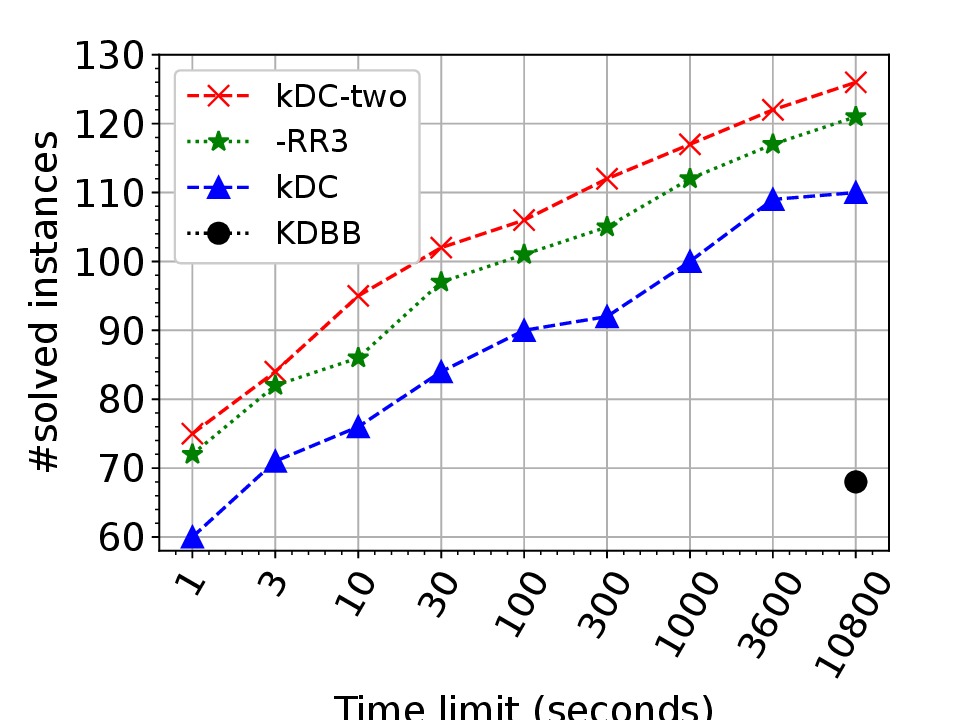}
}%
\subfigure[$k=20$]{
	\includegraphics[width=.5\columnwidth]{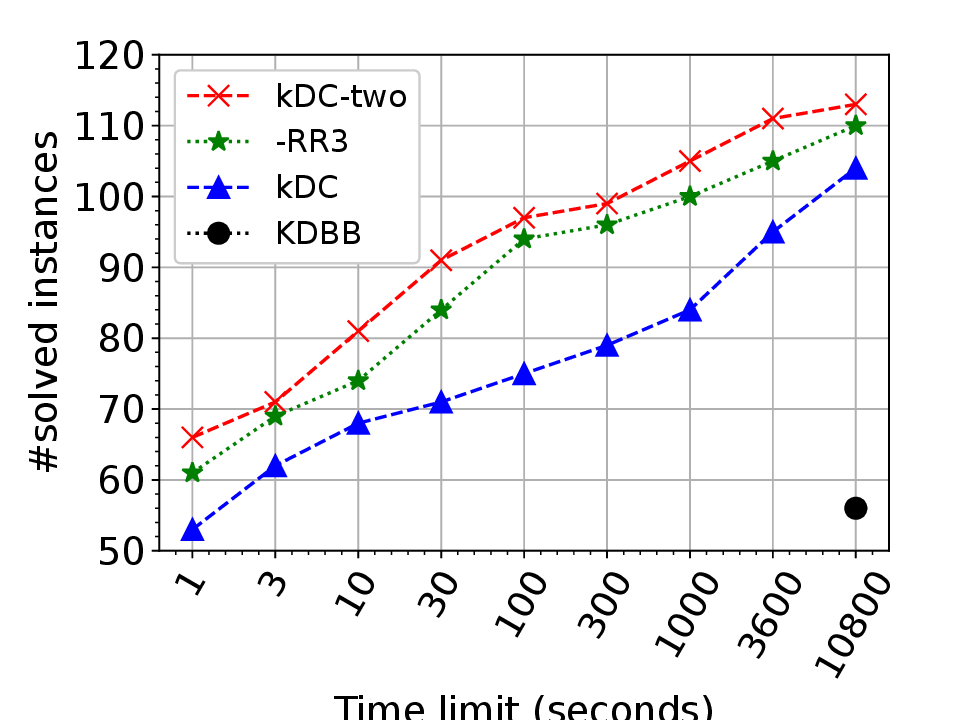}
}
\vspace{-2pt}
\caption{Number of solved instances by varying time limit for real-world graphs (best viewed in color)}
\label{fig:realworld}
\end{minipage}
\hspace*{5pt}
\begin{minipage}{.48\textwidth}
\subfigure[$k=1$]{
	\includegraphics[width=.5\columnwidth]{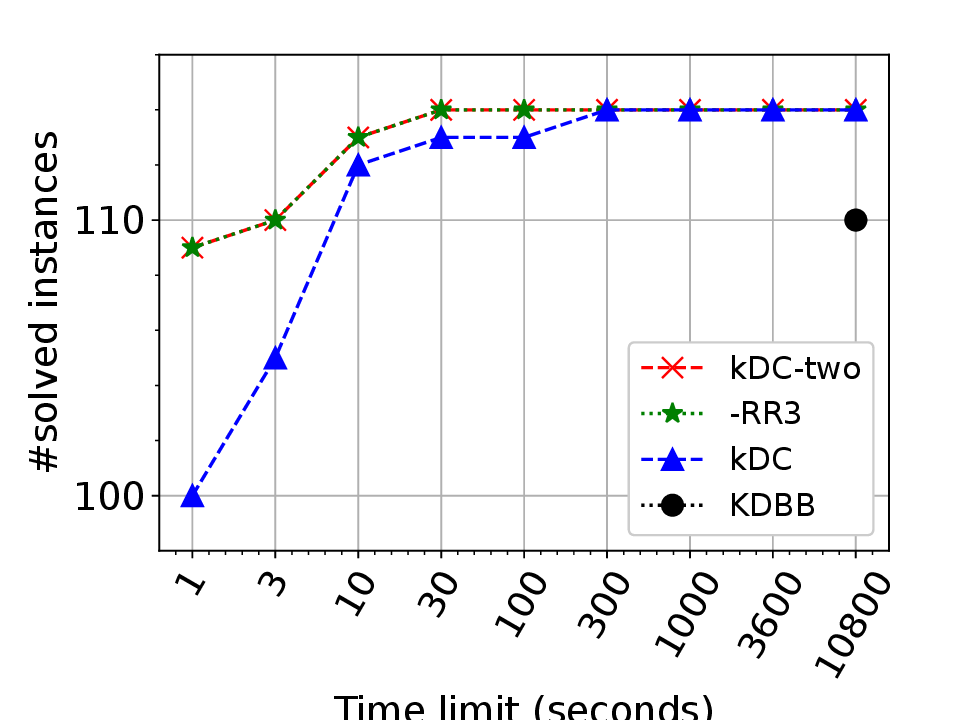}
}%
\subfigure[$k=3$]{
	\includegraphics[width=.5\columnwidth]{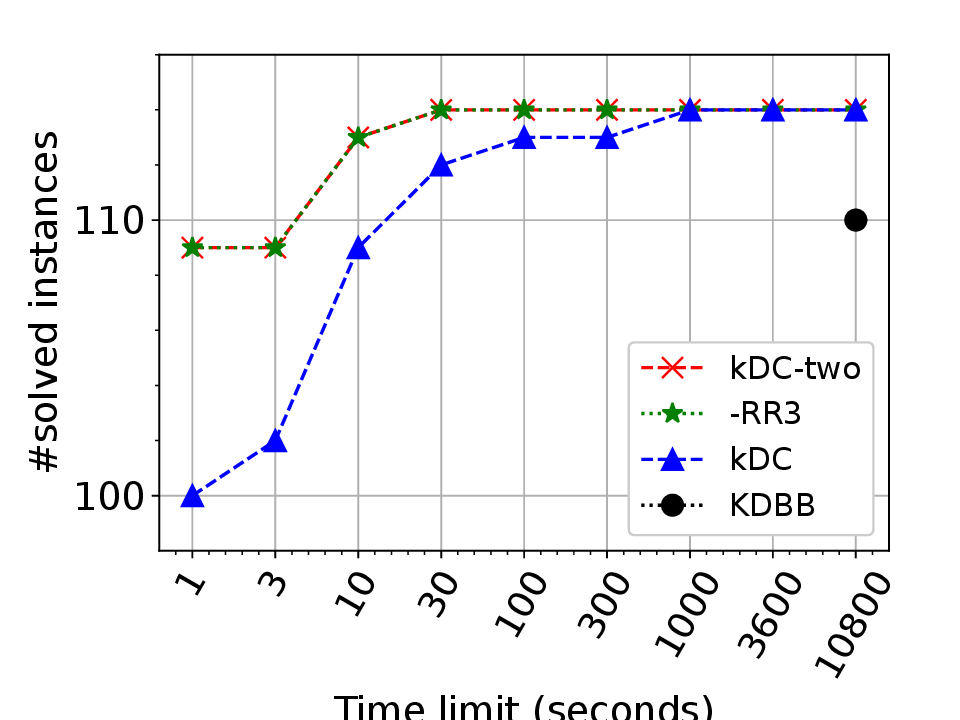}
}
\subfigure[$k=5$]{
	\includegraphics[width=.5\columnwidth]{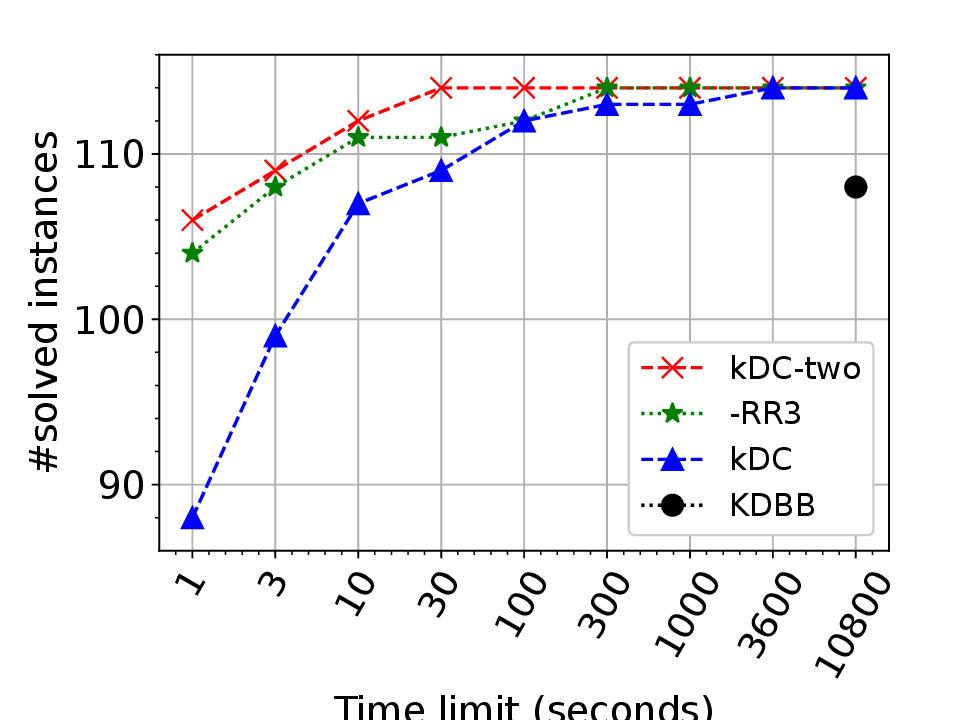}
}%
\subfigure[$k=10$]{
	\includegraphics[width=.5\columnwidth]{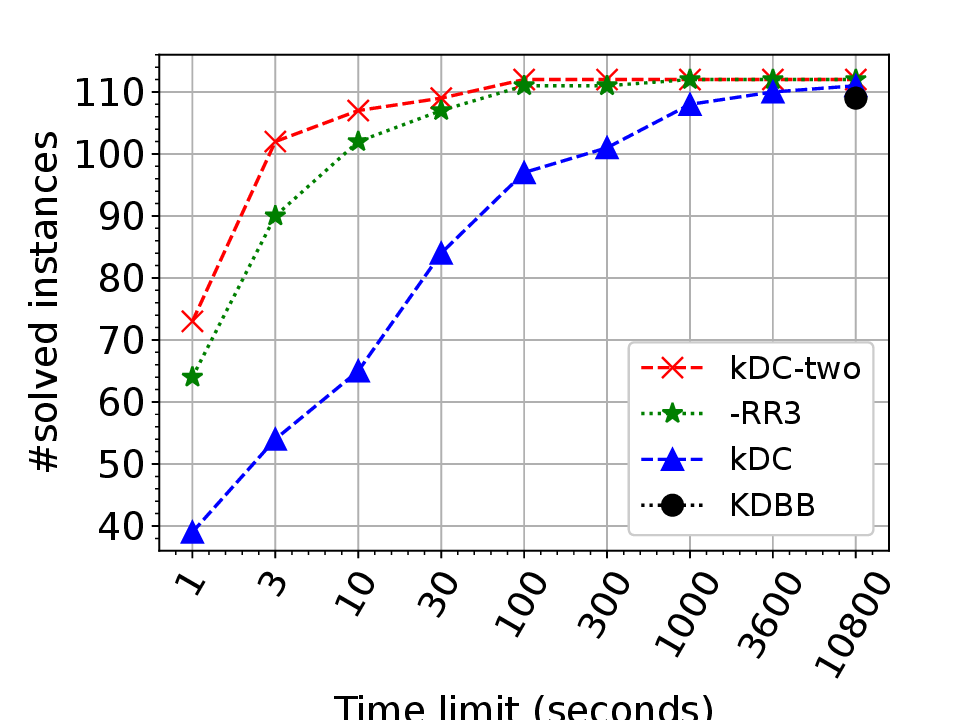}
}
\subfigure[$k=15$]{
	\includegraphics[width=.5\columnwidth]{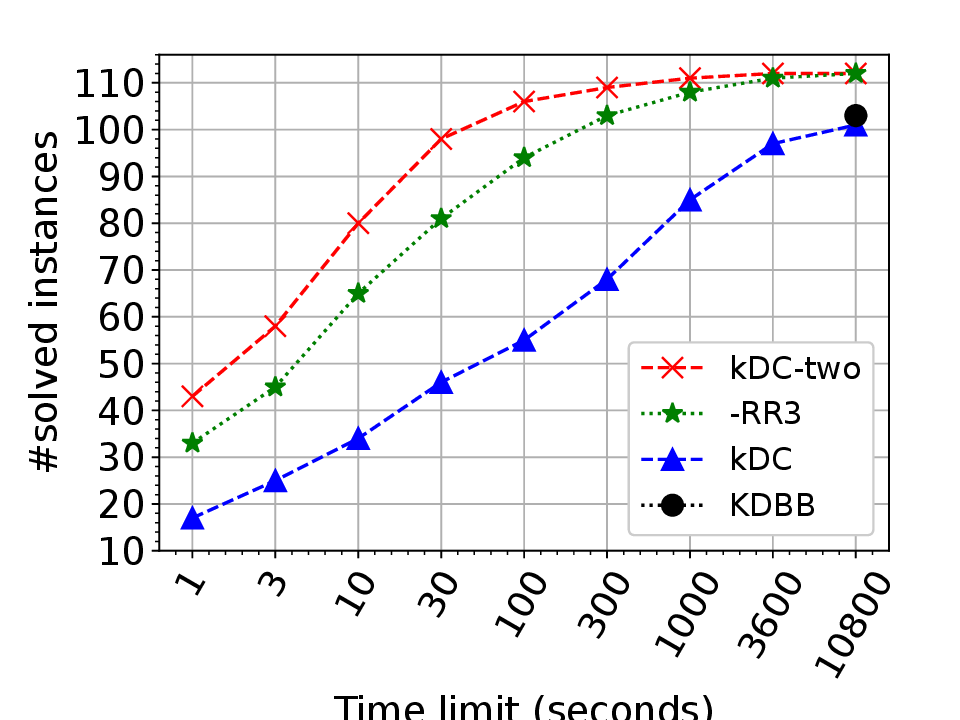}
}%
\subfigure[$k=20$]{
	\includegraphics[width=.5\columnwidth]{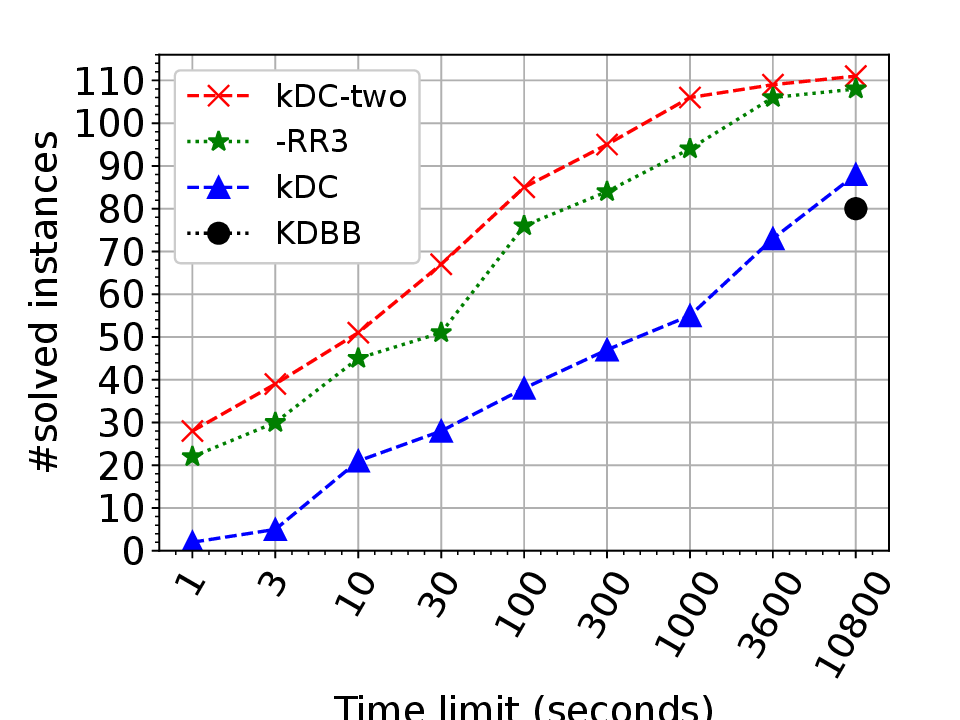}
}
\vspace{-2pt}
\caption{Number of solved instances by varying time limit for Facebook graphs (best viewed in color)}
\label{fig:facebook}
\end{minipage}
\end{figure*}

\section{Experiments}
\label{sec:experiment}

\newcommand{\wothree}{\kw{\text{-}RR3}}
\newcommand{\wofour}{\kw{\text{-}RR4}}
\newcommand{\woboth}{\kw{\text{-}RR3\&4}}

In this section, we evaluate the practical performance of \kdct, by
comparing it against the following two existing algorithms.
\begin{itemize}
	\item \kdc: the state-of-the-art algorithm proposed
		in~\cite{SIGMOD24:Chang}.
	\item \kdbb: the existing algorithm proposed in~\cite{AAAI22:Gao}.
\end{itemize}
We implemented our algorithm \kdct based on the code-base of \kdc that
is downloaded from \url{https://lijunchang.github.io/Maximum-kDC/};
thus, all optimizations/techniques that are implemented in \kdc, except
the second-order reduction rule, are used in \kdct.
In addition, we also implemented the following variant of \kdct
to evaluate the effectiveness of our new reduction rule {\bf RR3}.
\begin{itemize}
	\item \wothree: \kdct without the reduction rule {\bf RR3}.
\end{itemize}
All algorithms are implemented in C++ and compiled with the \mbox{-O3} flag.
All experiments are conducted in single-thread modes on a machine with an
Intel Core i7-8700 CPU and 64GB main memory.

We run the algorithms on the following three graph
collections, which are the same ones tested
in~\cite{AAAI22:Gao,SIGMOD24:Chang}.
\begin{itemize}
	\item The {\bf real-world graphs}
		collection~\footnote{\url{http://lcs.ios.ac.cn/~caisw/Resource/realworld\%20graphs.tar.gz}}
		contains $139$ real-world graphs from the Network Data Repository
		with up to $5.87\times 10^7$ vertices and $1.06\times 10^8$
		edges.
	\item The {\bf Facebook graphs} collection~\footnote{\url{https://networkrepository.com/socfb.php}}
		contains $114$ Facebook social networks from the Network Data
		Repository with up to $5.92\times 10^7$ vertices and $9.25\times
		10^7$ edges.
	\item The {\bf DIMACS10\&SNAP graphs} collection contains
			37 graphs with
			up to $1.04\times 10^6$ vertices and $6.89\times 10^6$
			edges. 
			$27$ of them are from 
			DIMACS10~\footnote{\url{https://www.cc.gatech.edu/dimacs10/downloads.shtml}}
			and $10$ graphs are from
			SNAP~\footnote{\url{http://snap.stanford.edu/data/}}.
\end{itemize}

Same as~\cite{AAAI22:Gao,SIGMOD24:Chang}, we choose $k$ from
$\{1,3,5,10,15,20\}$, and set a time limit of $3$ hours for each
testing (\ie, running a particular algorithm on a specific graph with a
chosen $k$ value).

\subsection{Against the Existing Algorithms}

In this subsection, we evaluate the efficiency of our algorithm \kdct
against the existing algorithms \kdc and \kdbb.
Note that, (1)~\kdc is the fastest existing algorithm, and (2)~as the
code of \kdbb is not available, the results of \kdbb reported in this
subsection are obtained from the original paper of \kdbb~\cite{AAAI22:Gao}.

\begin{table}[htb]
\footnotesize
\setlength{\tabcolsep}{2pt}
\centering
\caption{Number of instances solved by the algorithms \kdct, \kdc and \kdbb
	with a time limit of $3$ hours (best
performers are highlighted in bold)}
\label{table:against_existing}
\begin{tabular}{c|ccc|ccc|ccc}
\hline
& \multicolumn{3}{c|}{Real-world graphs} & \multicolumn{3}{c|}{Facebook
graphs} & \multicolumn{3}{c}{DIMACS10\&SNAP} \\ 
& \kdct & \kdc & \kdbb & \kdct & \kdc & \kdbb & \kdct & \kdc & \kdbb \\ \hline
$k = 1$ & {\bf 137} & 133 & 117 & {\bf 114} & {\bf 114} & 110 & {\bf 37} & {\bf 37} & 36 \\
$k=3$ & {\bf 136} & 130 & 107 & {\bf 114} & {\bf 114} & 110 & {\bf 37} & {\bf 37} & 35 \\
$k=5$ & {\bf 136} & 127 & 104 & {\bf 114} & {\bf 114} & 108 & {\bf 37} & {\bf 37} & 34 \\
$k=10$ & {\bf 128} & 119 & 85 & {\bf 112} & 111 & 109 & {\bf 36} & {\bf 36} & 30 \\
$k=15$ & {\bf 126} & 110 & 68 & {\bf 112} & 101 & 103 & {\bf 36} & 29 & 25 \\
$k=20$ & {\bf 113} & 104 & 56 & {\bf 111} & 88 & 80 & {\bf 34} & 27 & 22 \\ \hline
\end{tabular}
\end{table}

We first report in \ctab\ref{table:against_existing} the total number of
graph instances that are solved by each algorithm with a time limit of 3
hours. We can see that for all three algorithms, the number of solved
instances decreases when $k$ increases; this indicates that when $k$
increases, the problem becomes more difficult to solve.
Nevertheless, our algorithm \kdct consistently outperforms the
two existing algorithms by solving more instances within
the time limit. The improvement is more profound when $k$ becomes
large. For example, for $k=15$, \kdct solves $16$, $11$ and $7$ more
instances than the fastest existing algorithm \kdc on the three graph
collections, respectively; for $k=20$, the numbers are $9$, $23$,
and $7$.

Secondly, we compare the number of instances solved by the algorithms
when varying the time limit from $1$ second to $3$ hours. The results on
the real-world graphs and Facebook graphs for different $k$ values are
shown in Figures~\ref{fig:realworld} and \ref{fig:facebook},
respectively. We can see that our algorithm \kdct consistently
outperforms \kdc across all the time limits.
Also notice that, on the real-world graphs collection, our algorithm
\kdct with a time
limit of $1$ second solves even more instances than \kdbb with a time
limit of $3$ hours.
We also remark that our algorithm \kdct solves all $114$ Facebook graphs
with a time limit of $30$ seconds for $k=1$, $3$ and $5$, while the time
limits needed by \kdc are $125$, $393$ and $1353$ seconds, respectively;
on the other hand, \kdbb is not able to solve all instances with a time
limit of $3$ hours.

\begin{table}[ht]
\centering
\footnotesize
\addtolength{\tabcolsep}{-2.5pt}
\caption{Processing time (in seconds) of \wothree, \kdct, \kdc,
		and \kdbb on the $39$ Facebook
		graphs with more than $15,000$ vertices.
		Best performers are highlighted
		in bold: specifically, if the running time is slower than the
		fastest running time by only less than 10\%, it is 
	considered as the best.}
\label{table:facebook_time}
\begin{tabular}{crr|cccc|ccc}
\hline
 & & & \multicolumn{4}{c|}{$k=10$} & \multicolumn{3}{c}{$k=15$} \\ 
 & $n$ & $m$ & \wothree & \kdct & \kdc & \kdbb & \wothree & \kdct & \kdc \\ \hline
A-anon & 3M & 23M & 66 & {\bf 31} & 73 & - & 627 & {\bf 239} & 1062\\
Auburn71 & 18K & 973K & 4.7 & {\bf 2.7} & 956 & 1195 & 148 & {\bf 53} & -\\
B-anon & 2M & 20M & 79 & 51 & {\bf 44} & - & 2780 & {\bf 1447} & 2129\\
Berkeley13 & 22K & 852K & 0.24 & {\bf 0.22} & 0.34 & 630 & 0.77 & {\bf 0.54} & 70\\
BU10 & 19K & 637K & 0.66 & {\bf 0.34} & 4.0 & 370 & 3.6 & {\bf 1.8} & 156\\
Cornell5 & 18K & 790K & 0.98 & {\bf 0.76} & 17 & 2636 & 7.4 & {\bf 2.7} & 228\\
FSU53 & 27K & 1M & 1.4 & {\bf 1.0} & 248 & 1400 & 54 & {\bf 34} & -\\
Harvard1 & 15K & 824K & 2.3 & {\bf 1.6} & 11 & 1354 & 13 & {\bf 4.0} & 341\\
Indiana & 29K & 1M & 1.9 & {\bf 1.1} & 19 & 1421 & 16 & {\bf 6.5} & 861\\
Indiana69 & 29K & 1M & 1.9 & {\bf 1.1} & 19 & 1321 & 16 & {\bf 6.5} & 858\\
Maryland58 & 20K & 744K & 0.25 & {\bf 0.13} & 0.60 & 239 & 0.24 & {\bf 0.15} & 0.98\\
Michigan23 & 30K & 1M & 0.88 & {\bf 0.79} & 2.2 & 1384 & 5.9 & {\bf 2.2} & 211\\
MSU24 & 32K & 1M & {\bf 0.41} & {\bf 0.38} & 0.47 & 879 & 0.84 & {\bf 0.56} & 4.8\\
MU78 & 15K & 649K & 1.5 & {\bf 0.63} & 67 & 306 & 7.5 & {\bf 3.4} & 2437\\
NYU9 & 21K & 715K & {\bf 0.11} & {\bf 0.10} & 0.12 & 466 & {\bf 0.14} & {\bf 0.13} & 0.58\\
Oklahoma97 & 17K & 892K & 1.9 & {\bf 1.5} & 379 & 6926 & 76 & {\bf 50} & -\\
OR & 63K & 816K & 4.1 & {\bf 1.3} & 55 & 1486 & 31 & {\bf 10.0} & 400\\
Penn94 & 41K & 1M & {\bf 0.26} & {\bf 0.25} & 0.29 & 1972 & 0.39 & {\bf 0.35} & 0.61\\
Rutgers89 & 24K & 784K & {\bf 0.11} & {\bf 0.10} & 0.20 & 386 & 0.23 & {\bf 0.13} & 4.2\\
Tennessee95 & 16K & 770K & {\bf 0.99} & {\bf 0.94} & 1.8 & 554 & 1.5 & {\bf 1.1} & 20\\
Texas80 & 31K & 1M & 3.0 & {\bf 2.0} & 80 & 753 & 13 & {\bf 6.7} & 912\\
Texas84 & 36K & 1M & 12 & {\bf 3.6} & 1321 & 10253 & 364 & {\bf 85} & -\\
UC33 & 16K & 522K & {\bf 0.09} & {\bf 0.08} & 0.14 & 263 & 0.21 & {\bf 0.17} & 1.1\\
UCLA & 20K & 747K & {\bf 0.11} & {\bf 0.11} & 0.14 & 290 & 0.20 & {\bf 0.17} & 0.58\\
UCLA26 & 20K & 747K & {\bf 0.11} & {\bf 0.11} & 0.14 & 288 & 0.20 & {\bf 0.17} & 0.55\\
UConn & 17K & 604K & 0.09 & {\bf 0.08} & 0.13 & 194 & 0.34 & {\bf 0.20} & 2.1\\
UConn91 & 17K & 604K & 0.09 & {\bf 0.08} & 0.13 & 208 & 0.34 & {\bf 0.20} & 2.2\\
UF & 35K & 1M & 1.6 & {\bf 1.3} & 27 & 2579 & 34 & {\bf 16} & 10778\\
UF21 & 35K & 1M & 1.6 & {\bf 1.3} & 27 & 2571 & 34 & {\bf 16} & 10765\\
UGA50 & 24K & 1M & 49 & {\bf 26} & 3318 & 6794 & 1874 & {\bf 671} & -\\
UIllinois & 30K & 1M & 2.0 & {\bf 1.5} & 3.6 & 1245 & 6.5 & {\bf 3.5} & 85\\
UIllinois20 & 30K & 1M & 2.0 & {\bf 1.6} & 3.6 & 1217 & 6.5 & {\bf 3.5} & 86\\
UMass92 & 16K & 519K & 0.21 & {\bf 0.19} & 0.30 & 318 & 0.65 & {\bf 0.43} & 5.1\\
UNC28 & 18K & 766K & 1.1 & {\bf 0.67} & 2.1 & 380 & 4.8 & {\bf 2.1} & 12\\
USC35 & 17K & 801K & {\bf 0.41} & {\bf 0.38} & 0.52 & 409 & 1.3 & {\bf 0.89} & 10\\
UVA16 & 17K & 789K & 2.4 & {\bf 1.4} & 14 & 552 & 25 & {\bf 10} & 394\\
Virginia63 & 21K & 698K & 0.30 & {\bf 0.27} & 1.1 & 215 & 0.46 & {\bf 0.39} & 5.6\\
Wisconsin87 & 23K & 835K & 2.0 & {\bf 1.0} & 47 & 924 & 21 & {\bf 8.3} & 1604\\
wosn-friends & 63K & 817K & 4.1 & {\bf 1.3} & 54 & 1260 & 31 & {\bf 10.0} & 401\\
\hline
\end{tabular}

\end{table}

Thirdly, we report the actual processing time of \kdct, \kdc and \kdbb
on a subset of Facebook graphs that have more than $15,000$ vertices
for $k=10$ and $15$ in \ctab\ref{table:facebook_time}, where `$-$'
indicates that the processing time is longer than the $3$-hour limit.
There are totally $41$ such graphs. But none of the tested algorithms
can finish within the time limit of $3$ hours on graphs \texttt{konect}
and \texttt{uci-uni}; thus, these two graphs are omitted from
\ctab\ref{table:facebook_time}. Also, the results of \kdbb for
$k=15$ are omitted, as they are not available.
The number of vertices and edges for each graph are illustrated in the
second and third columns of \ctab\ref{table:facebook_time}, respectively.
From \ctab\ref{table:facebook_time}, we can observe that our algorithm
\kdct consistently and significantly outperforms \kdc, which in turn
runs significantly faster than \kdbb, across all these graphs.
In particular, \kdct is on average $45\times$ and $102\times$
faster than \kdc for $k=10$ and $15$, respectively.

In summary, our algorithm \kdct consistently solves more graph instances
than the fastest existing algorithm \kdc when varying the time limit
from $1$ second to $3$ hours, and also consistently runs faster than
\kdc across the different graphs with an average speed up up-to two
orders of magnitude.

\subsection{Ablation Studies}

In this subsection, we first evaluate the effectiveness of our new reduction
rule {\bf RR3} by comparing \kdct with \wothree which is the variant of
\kdct without {\bf RR3}. The results of \wothree are also shown in
Figures~\ref{fig:realworld} and \ref{fig:facebook} and
\ctab\ref{table:facebook_time}. We can observe from
Figures~\ref{fig:realworld} and \ref{fig:facebook} that the reduction
rule {\bf RR3} enables \kdct to solve more graph instances across the
different time limits. In particular, \kdct solves $3$, $5$ and $3$ more
instances than \wothree for $k=10$, $15$ and $20$, respectively, on the
real-world graphs collection with the time limit of $3$ hours.
From \ctab\ref{table:facebook_time}, we can see that \kdct
consistently runs faster than \wothree across the different Facebook
graphs with an average speed up of $2$ times for $k=15$. This
demonstrates the practical effectiveness of our new reduction rule {\bf
RR3}.

\begin{table}[t]
\small
\setlength{\tabcolsep}{2pt}
\centering
\caption{Number of graphs with small maximum $k$-defective clique (\ie,
	$\omega_k(G) \leq k+1$) and number of graphs with large maximum
$k$-defective clique (\ie, $\omega_k(G) \geq k+2$)}
\label{table:small_large}
\begin{tabular}{c|cc|cc|cc}
\hline
& \multicolumn{2}{c|}{Real-world graphs} & \multicolumn{2}{c|}{Facebook
graphs} & \multicolumn{2}{c}{DIMACS10\&SNAP} \\ 
& \#small & \#large & \#small & \#large & \#small & \#large \\ \hline
$k = 1$ & 2 & 137 & 0 & 114 & 0 & 37 \\
$k=3$ & 13 & 126 & 0 & 114 & 0 & 37 \\
$k=5$ & 22 & 117 & 1 & 113 & 1 & 36 \\
$k=10$ & 40 & 98 & 1 & 111 & 8 & 29 \\
$k=15$ & 47 & 91 & 1 & 111 & 12 & 25 \\
$k=20$ & 53 & 83 & 1 & 111 & 16 & 21 \\ \hline
\end{tabular}
\end{table}

Secondly, we compare \wothree with \kdc. Note that, \wothree uses the
same branching rule, reduction rules, and upper bounds as \kdc; the only
difference between them is that \wothree conducts the computation in two
stages and exploits the diameter-two property for pruning in Stage-I.
From Figures~\ref{fig:realworld} and \ref{fig:facebook} and
\ctab\ref{table:facebook_time}, we can see that \wothree consistently
outperforms \kdc.
This demonstrates the practical effectiveness of diameter-two-based
pruning.
To gain more insights, we report in \ctab\ref{table:small_large} the
number of graphs with small maximum $k$-defective clique (\ie,
$\omega_k(G) \leq k+1$) and the number of graphs with large maximum
$k$-defective clique (\ie, $\omega_k(G) \geq k+2$) for each graph
collection and $k$ value. We can see that when $k$ increases, the
proportion of graphs with $\omega_k(G) \geq k+2$ decreases.
Nevertheless, even for $k=20$, there are still a lot of graphs with
$\omega_k(G) \geq k+2$ such that \kdct runs in $\bigo^*(
(\alpha\Delta)^{k+2} \gamma_{k-1}^\alpha)$ time; this is especially true
for Facebook graphs.

\section{Related Work}
\label{sec:related_work}

The concept of defective clique was firstly studied in~\cite{Bio06:Yu}
for predicting missing interactions between proteins in biological
networks.
Since then, designing exact algorithms for efficiently finding the
maximum defective clique has been investigated due to its importance,
despite being an NP-hard problem.
Early algorithms, such as those proposed
in~\cite{COA13:Trukhanov,DAM18:Gschwind,IJC21:Gschwind}, are inefficient
and can only deal with small graphs.
The \madec algorithm proposed by Chen et al.~\cite{COR21:Chen} is the
first algorithm that can handle large graphs, due to the incorporating
of a graph coloring-based upper bound and other pruning techniques.
The \kdbb algorithm proposed by Gao et al.~\cite{AAAI22:Gao} improves
the practical performance by proposing preprocessing as well as multiple
pruning techniques.
The \kdc algorithm proposed by Chang~\cite{SIGMOD24:Chang} is
the state-of-the-art algorithm which incorporates an improved
graph-coloring-based upper bound, a better initialization method, as
well as multiple new reduction rules.
From the theoretical perspective, among the existing algorithms, only
\madec~\cite{COR21:Chen} and \kdc~\cite{SIGMOD24:Chang} beats the
trivial time complexity of $\bigo^*(2^n)$.
Specifically, \madec runs in $\bigo^*(\gamma_{2k}^n)$ time while \kdc
improves the time complexity to $\bigo^*(\gamma_k^n)$, where $\gamma_k <
2$ is the largest real root of the equation $x^{k+3}-2x^{k+2}+1=0$ and
$k$ is the number of allowed missing edges.
In this paper, we proposed the \kdct algorithm to further improve both the
time complexity and practical performance for maximum defective clique
computation.

The problem of enumerating all maximal $k$-defective cliques was also
studied recently where the \pivotplus algorithm proposed
in~\cite{SIGMOD23:Dai} runs in
$\bigo^*(\gamma_k^n)$ time, the same as the time complexity of \kdc.
However, we remark that (1)~\pivotplus is
inefficient for finding the maximum $k$-defective clique in practice due to lack of
pruning techniques; (2)~\pivotplus achieves the time complexity 
via using a different branching technique from
\kdc and it is unclear how to improve the base of its time complexity
without changing the branching rule.
The problem of approximately counting $k$-defective cliques of a
particular size, for the special case of $k=1$ and $2$, was studied
in~\cite{WWW20:Jain}; however, the techniques of~\cite{WWW20:Jain}
cannot be used for finding the
maximum $k$-defective clique and for a general $k$.

Another related problem is maximum clique computation, as clique is a
special case of $k$-defective clique for $k=0$.
The maximum clique is not only NP-hard to compute
exactly~\cite{CCC72:Karp}, but also NP-hard to approximate within a
factor of $n^{1-\epsilon}$ for any constant $0 < \epsilon <
1$~\cite{FOCS96:Hastad}.
Nevertheless, extensive efforts have been spent on reducing the time
complexity from the trivial $\bigo^*(2^n)$ to $\bigo^*(1.4422^n)$,
$\bigo^*(1.2599^n)$~\cite{SJC77:Tarjan},
$\bigo^*(1.2346^n)$~\cite{TC86:Jian}, and
$\bigo^*(1.2108^n)$~\cite{JA86:Robson}, with the state of the art being
$\bigo^*(1.1888^n)$~\cite{clique}.
On the other hand, extensive efforts have also been spent on improving
the practical efficiency, while ignoring theoretical time complexity, 
\eg,~\cite{ORL90:Carraghan,COR17:Li,ICTAI13:Li,JGO94:Pardalos,IM15:Pattabiraman,JSC15:Rossi,COR16:Segundo,WALCOM17:Tomita,WALCOM10:Tomita,ICDE13:Xiang,KDD19:Chang}.
However, it is non-trivial to extend these techniques to $k$-defective
computation for a general $k$.

\section{Conclusion}
\label{sec:conclusion}

In this paper, we proposed the \kdct algorithm for exact maximum
$k$-defective clique computation that runs in $\bigo^*(
(\alpha\Delta)^{k+2} \times \gamma_{k-1}^\alpha)$ time for graphs with
$\omega_k(G)\geq k+2$, and in $\bigo^*(\gamma_{k-1}^n)$ time
otherwise. This improves the state-of-the-art time complexity
$\bigo^*(\gamma_k^n)$.
We also proved that \kdct, with slight modification, runs in $\bigo^*(
(\alpha\Delta)^{k+2} \times (k+1)^{\alpha+k+1-\omega_k(G)})$ time when
using the degeneracy gap $\alpha+k+1-\omega_k(G)$ parameterization. In
addition, from the practical side, we designed a new
degree-sequence-based reduction rule that can be conducted in linear
time, and theoretically demonstrated its effectiveness compared with
other reduction rules.
Extensive empirical studies on three benchmark collections with $290$
graphs in total showed that \kdct outperforms the existing fastest
algorithm by several orders of magnitude in practice.

\balance
\bibliographystyle{ACM-Reference-Format}
\bibliography{sigproc}

\end{document}